\theoremstyle{plain}
\newtheorem{theorem}{Theorem}
\newtheorem{lemma}{Lemma}
\newtheorem{corollary}{Corollary}
\newtheorem{definition}{Definition}
\newcommand*{\etaltwo}{\textit{et al}.\@ }					\newcommand{\ex}[1]{\mathrm{e}{#1}}
\newcommand{\ifshort}[2]{\ifthenelse{\boolean{SHORT}}{\color{black}#1\color{black}}{\color{black}#2\color{black}}}
\def\maketag@@@#1{\hbox{\m@th\normalfont\normalsize#1}}
\def\BibTeX{{\rm B\kern-.05em{\sc i\kern-.025em b}\kern-.08em
    T\kern-.1667em\lower.7ex\hbox{E}\kern-.125emX}}
\begin{document}

\title{Dynamics of opinion polarization}
\author{Elisabetta Biondi, Chiara Boldrini, Andrea Passarella, Marco Conti
\thanks{All authors are with the Institute of Informatics and Telematics (IIT) of the National Research Council (CNR), Italy. email: first.last@iit.cnr.it}
\thanks{This work is supported by the European Union – Horizon 2020 Program under the “SoBigData-PlusPlus” (Grant Agreement 871042) and "HumanE-AI-Net" (Grant Agreement 952026) projects. This work is also supported by the SAI project, funded by the CHIST-ERA grant CHIST-ERA-19-XAI-010. The work of C. Boldrini, M. Conti and A. Passarella is partly supported by PNRR - M4C2 - Investimento 1.3, Partenariato Esteso PE00000013 - "FAIR - Future Artificial Intelligence Research" - Spoke 1 "Human-centered AI", funded by the European Commission under the NextGeneration EU programme. The work of C. Boldrini is  also supported by project SERICS (PE00000014) under the MUR National Recovery and Resilience Plan funded by the European Union - NextGenerationEU.

%
}
}


\maketitle

\begin{abstract}
For decades, researchers have been trying to understand how people form their opinions. This quest has become even more pressing with the widespread usage of online social networks and social media, which seem to amplify the already existing phenomenon of polarization. 
In this work, we study the problem of polarization assuming that opinions evolve according to the popular Friedkin-Johnsen~(FJ) model. The FJ model is one of the few existing opinion dynamics models that has been validated on small/medium-sized social groups. First, we carry out a comprehensive survey of the FJ model in the literature (distinguishing its main variants) and of the many polarization metrics available, deriving an invariant relation among them. Secondly, we derive the conditions under which the FJ variants are able to induce opinion polarization in a social network, as a function of the social ties between the nodes and their individual susceptibility to the opinion of others. Thirdly, we discuss a methodology for finding concrete opinion vectors that are able to bring the network to a polarized state. Finally, our analytical results are applied to two real social network graphs, showing how our theoretical findings can be used to identify polarizing conditions under various configurations. 
\end{abstract}

\begin{IEEEkeywords}
opinion dynamics, polarization, Friedkin-Johnsen model
\end{IEEEkeywords}

\section{Introduction}
\IEEEPARstart{W}{ith}  the rise of social media and online social networks, online interactions have started playing an increasingly important role in how people form their opinions, to the point that news consumption itself is now often mediated by social interactions~\cite{hermida2012share,gottfried2016news}. Social networks, though, do not merely provide a transparent technological substrate that facilitates interactions in the online dimension. Their algorithmic personalization, aimed at highlighting content that is more interesting to each of us, effectively reinforces our cognitive biases, reducing the cognitive discomfort we experience when exposed to opinions challenging our beliefs but at the same time reducing the diversity and range of opinions we are exposed to. By reinforcing consonant opinions and downplaying, or even removing, discordant ones, social networks cradle us into curated filter bubbles and comfortable echo chambers. However, whether this leads to actual polarization~\cite{spohr2017fake,fiorina2008political,conover2011political} is still debated. Some argue that the very nature of social networks, i.e., the socialization of information consumption, may counteract the above effects~\cite{Messing2014}, others that individual choices (to bond with similar others and to prefer concordant information) are more predominant than algorithmic filtering~\cite{Bakshy2015}, others again that exposure to opposing views is more likely to actually backfire than to widen our perspectives~\cite{Bail2018}. To make matter worse, information may not only be partisan but it could also be blatantly fake~\cite{Lazer2018}. 

This quest towards a better understanding of the impact of the social algorithm~\cite{Lazer2018} and misinformation on our societies is ingrained with a more general question that, even when removing the cyber-dimension, still remains unsolved: how do people form their opinions? 
This question has fascinated sociologists and economists alike since much before the advent of the Internet, but it has recently gained new momentum, with computational sociologists and control theorists now weighing in. 
The literature on opinion dynamics is vast, with many models being proposed that aim at capturing a variety of cognitive and social mechanisms that lead to forming an opinion, such as social influence (which determines whose opinion you are affected by), cognitive dissonance (which triggers your willingness to adapt), anchoring to one’s own opinion (which captures our prejudices). For an in-depth discussion, we refer the interested reader to recent surveys, such as~\cite{DONG201857,Anderson2019,HASSANI202222}.

So-called \emph{averaging models} are one of the most popular classes of such opinion dynamics models~\cite{degroot1974reaching,french1956formal,friedkin1990social}. 
In these models, the final opinions (also known as expressed opinions) are a function of a repeated weighted averaging of the opinions of neighboring (in the influence graph) nodes. 
The strengths of averaging models lie in their mathematical tractability~\cite{proskurnikov2017tutorial}, ability to capture strong\footnote{This is in contrast with the weak opinion diversity generated by models like the Hegselmann-Krause model~\cite{Hegselmann2002}, where the final opinions form clusters in which every opinion is the same.} opinion diversity~\cite{Mas2014}, and their general flexibility (e.g., they can capture the \emph{wisdom of the crowd} phenomenon~\cite{Das2013} or include prominent agents~\cite{Acemoglu2011} such as media sources and politicians that may be systematically biased and not willing to change their opinion at all).

The Friedkin-Johnsen (FJ) model~\cite{friedkin1990social} is the most popular averaging model in the related literature. It is the only model that has been validated on small and medium-sized groups~\cite{Friedkin2011,Friedkin2017}, and even in human-AI group experiments~\cite{Askarisichani2022}. Focusing on it, our first contribution is to provide a comprehensive review of all the major variants of the FJ model and of the polarization metrics described in the related literature. For them, we will highlight their key features and the differences between each other. We found that polarization metrics are linked together through an invariant relationship.
As a second contribution, we derive the conditions under which the FJ model yields polarization, for each of the polarization metrics identified before. In addition, we also prove that the polarizing opinion vectors can be found analytically in most cases. All the results obtained are exploited to identify polarizing conditions, under different configurations, with two popular datasets of real social networks. 

\vspace{-10pt}
\subsection{Background and motivation}

The simplest averaging model is the DeGroot's model~\cite{degroot1974reaching}, whereby the opinion of a node is simply the average opinion of its neighbours, weighted by the strength of their social influence. 
This model, however, is not considered realistic, since, when it converges (i.e., if the nodes' opinions stabilize), it always leads to \emph{consensus}, i.e., to a final state in which all nodes have exactly the same opinion~\cite{krackhardt2009plunge}. 
To overcome this problem, Friedkin and Johnsen~\cite{friedkin1990social} proposed a variation on the Groot's model that introduces a certain degree of stubbornness in nodes.
%
Their hypothesis is that a personal opinion always remains at least partly anchored to the initial opinion (or \emph{prejudice}), more or less so depending on the individual's attitude to be influenced by others. The Friedkin-Johnsen model does not lead to consensus (except in very particular cases~\cite{Friedkin2015}) and 
%
has been widely popular in the related literature~\cite{Gionis2013,Bindel2015,Matakos2018,Musco2018,Chen2018,proskurnikov2017tutorial,proskurnikov2018dynamics,Friedkin2017,Friedkin2015}.
The FJ model has enjoyed two main avenues of research: on the one hand, the derivation of the conditions for convergence or consensus has been the main focus of the research efforts from the control theory domain~\cite{parsegov2016novel,proskurnikov2018dynamics,proskurnikov2017tutorial,Proskurnikov2018}. 
On the other hand, the graph-theoretical efforts~\cite{Bindel2015,Matakos2017,Chen2018,Dandekar2013,Gionis2013, Musco2018,Abebe2018} have been focused on understanding the effects of the underlying influence graph on opinion formation, polarization, and on how to interfere with the opinion formation process in order to obtain a desired outcome, (e.g., shifting the opinion in a specific direction, minimizing polarization and/or disagreement). 

While all the above works refer to the opinion dynamics model they leverage as Friedkin-Johnsen, they are often relying on a simplified version of it. Specifically, they use the more mathematically tractable version (which we refer to, later on, as rFJ), which, however, is not able to capture polarization (we discuss this point later in the paper).   
This has resulted in great confusion regarding which finding holds under which hypothesis. 
The second gap in the related literature, and a direct consequence of the above confusion, lies in whether the FJ model is actually able to capture polarization or not. Indeed, despite being opinion polarization a fundamental feature of a realistic opinion formation process, only Gionis~\etaltwo~\cite{Gionis2013} and Dandekar~\etaltwo~\cite{Dandekar2013} have explicitly tackled this problem. Analyzing the problem on undirected social networks, they have proved that two \emph{variants} of the FJ model are neither capable of changing the average opinion of the social network nor of increasing the weighted difference of opinions among nodes of the same neighbourhood. However, what happens with the general FJ model and with other polarization metrics is yet unknown.

\section{Modelling framework}\label{sec:model}

We explicitly differentiate between the social graph and the influence graph. They both comprise the same set $\mathcal{V}$ of $n$ vertices and the same set of edges $\mathcal{E}$, but the weights of the edges are different and have different meanings. The social graph, denoted with $\mathcal{S}$, represents people (vertices) and the \emph{social} relationships between them (through the edge weights $\hat{w}_{ij}$).
The strength of a social relationship is typically measured in terms of the number of interactions that two people have~\cite{zachary1977information} and for this reason the few results on polarization in the related works assume that the social graph is undirected~\cite{Dandekar2013,Gionis2013}. 
In this paper, we will consider the general case of a directed social graph, specifying how results change in the specific case of an undirected one.
The \emph{influence graph} $\mathcal{I}$ describes how a node's opinion is influenced by that of its neighbours. The existence of an edge from node $i$ to node $j$ in $\mathcal{I}$ implies that node $j$ exerts an influence on the opinion of node $i$, and the strength of this influence is expressed by the edge weight $w_{ij}$. Lacking additional information, the influence graph can be derived from the social graph, leveraging the intuition that stronger social relationships will influence more than weak ones. Specifically, starting from the social weights $\hat{w}_{ij}$, the influence $w_{ij}$ can be computed as $w_{ij}~=~\frac{\hat{w}_{ij}}{\sum_{j=1}^{n} \hat{w}_{ij}}$. Please note that this definition is the only one that allows a unique correspondence between all the variants of the FJ model.
The matrix  $W = (w_{ij})$ is called  \emph{influence matrix} and is assumed to be row-stochastic (because it captures how the influence a node is subject to is split among its neighbours). 
The influence matrix is in general asymmetric (corresponding to a directed influence graph), even starting from a symmetric social matrix $\hat{W} = (\hat{w}_{ij})$, because the influence weight $w_{ij}$ expresses the relative importance of $j$ with respect to all $i$'s social relationships. Hence, the same social relationship intensity can weigh very differently depending on the strength of other relationships.

\vspace{-10pt}
\subsection{The Friedkin-Johnsen family of opinion dynamics models}
\label{sec:model:opinion_dynamics}

A discrete-time opinion dynamics model tracks the evolution of $z_i(k)$, the opinion expressed by a node $i$ at time $k$. Opinions are generally assumed to be real-valued, i.e., continuous in a certain reference interval. Similarly to the related literature~\cite{Friedkin2011,Gionis2013,Bindel2015}, here we assume that opinions belong to~$\left[-1,1\right]$. Thus, extremes -1 and 1 represent opposing viewpoints on an issue. For a given configuration of its input parameters, the model is said to be \emph{convergent} if $z_i(k+1) \rightarrow z_i$ for all $i$ as $k$ grows to infinity. A convergent model is said to \emph{reach consensus} if $z_i(k+1) \to z$ for all $i$ as $k$ grows to infinity.
In the Friedkin-Johnsen model family, before the opinion formation process starts, each node $i$ has an initial opinion $s_i$, often referred to as \emph{internal} or \emph{fixed} opinion (or \emph{prejudice}). In contrast, the opinion $z_i(k)$ is often referred to as the \emph{expressed opinion} at time $k$. \ifshort{}{We start by defining the more general version of the model (Definition~\ref{def:gfj} below), which we denote with the acronym gFJ.}
In Table~\ref{tab:fj_family} we summarize the variants of the FJ models that can be found in the literature and we discuss them separately hereafter. We denote with $N(i)$ the neighborhood set of node $i$.

\begin{table}[h]
    \centering
    \caption{The FJ family of models}
    \label{tab:fj_family}
    \rule{\columnwidth}{\heavyrulewidth}
    \begin{flalign}
    &\text{gFJ:} & z_i(k + 1) = (1-\lambda_{i}) s_i + \lambda_i \sum_{j \in \{i\} \cup N(i)} w_{ij} z_{j}(k) &&& \label{eq:gfj}
    \end{flalign}
\vspace{-\baselineskip}
    \begin{flalign}
    &\text{vFJ:} & z_i(k + 1) = \frac{\hat{w}_{ii} s_i + \sum_{j \in N(i)} \hat{w}_{ij} z_{j}(k)}{ \hat{w}_{ii}+\sum_{j \in N(i)} \hat{w}_{ij}} &&& \label{eq:vfj}
    \end{flalign}
\vspace{-\baselineskip}
    \begin{flalign}
    &\text{rFJ:} & z_i(k + 1) = \frac{s_i + \sum_{j \in N(i)} \hat{w}_{ij} z_{j}(k)}{1 + \sum_{j \in N(i)} \hat{w}_{ij}}  &&& \label{eq:rfj}
    \end{flalign}
\vspace{-\baselineskip}
\rule{\columnwidth}{\heavyrulewidth}
\vspace{-\baselineskip}
\end{table}

\subsubsection{The generalized Friedkin-Johnsen model - gFJ}
\eqref{eq:gfj} in Table~\ref{tab:fj_family} corresponds to the more general version of the model originally proposed by Friedkin and Johnsen~\cite{friedkin1990social}. The outermost weighted average depends on parameter $\lambda_i$, corresponding to the \emph{susceptibility} of node $i$ to the opinions of other nodes. The innermost weighted average depends on the \emph{influence} $w_{ij}$ that node $j$ exerts on node $i$.
Two main mechanisms are at play here: \emph{anchoring}, to the node $i$'s internal opinion $s_i$,  and variable \emph{susceptibility} $\lambda_i$, to other nodes' opinions. Nodes with zero susceptibility value are  \emph{stubborn} nodes and they never change their opinion. 
A common matrix-formulation of the model is the following:
\begin{equation}\label{eq:gfj_matrix}
\bm{z}(k + 1) = (I - \Lambda) \bm{s} + \Lambda W \bm{z}(k),
\end{equation}
where $\Lambda$ is a diagonal matrix containing the susceptibility values $\lambda_{i}$, while $W$ is the influence matrix. 
Note that the opinion of a node $i$ depends both on its initial prejudice $s_i$ (by a weight $1-\lambda_i$) and on its current opinion (by a weight $\lambda_i w_{ii}$). The only case in which this does not happen is when the node is stubborn ($\lambda_i = 0$) or when $w_{ii}=0$. $\Lambda$ and $W$ are sometimes coupled via the condition $1 - \lambda_i = w_{ii}$~\cite{proskurnikov2017tutorial}, however we do not make this assumption here. 
%
The conditions under which the gFJ model achieves convergence and consensus have been thoroughly studied in the related literature~\cite{Friedkin2015,parsegov2016novel,proskurnikov2017tutorial}. A sufficient condition for convergence~\cite{proskurnikov2018dynamics} is reported below, which we will use often in the rest of the paper.
\begin{theorem}[Sufficient condition for the gFJ]\label{theo:gfj_convergence}
If $\Lambda W$ is stable (i.e., has eigenvalues inside the open unit circle $\{z\in \mathbf{C}: |z|<1\}$), the gFJ model is convergent and its only stationary point $\bm{z}$ (i.e., steady-state solution) is given by the following:
\begin{equation}\label{eq:gfj_solution}
  \bm{z}= (I-\Lambda W)^{-1} (I-\Lambda) \bm{s}.
\end{equation}
%
\end{theorem}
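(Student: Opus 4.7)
The plan is to prove convergence by iterating the linear recurrence \eqref{eq:gfj_matrix} into closed form, then identify the limit with the claimed expression, and finally check that this is the unique stationary point. Throughout, "stability" of $\Lambda W$ means that its spectral radius is strictly less than $1$, so in particular $(\Lambda W)^k \to 0$, the Neumann series $\sum_{k=0}^{\infty} (\Lambda W)^k$ converges to $(I - \Lambda W)^{-1}$, and $I - \Lambda W$ is invertible (because $1$ is not an eigenvalue of $\Lambda W$). These are the only facts from linear algebra that the argument will need.

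First, I would unroll the recursion. Starting from an arbitrary $\bm{z}(0)$, induction on $k$ immediately gives
\begin{equation*}
\bm{z}(k) = (\Lambda W)^k \bm{z}(0) + \Bigl( \sum_{j=0}^{k-1} (\Lambda W)^j \Bigr) (I - \Lambda)\bm{s}.
\end{equation*}
This is the standard closed form for an affine linear iteration and does not require any hypothesis beyond the definition of the model.

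Next, I would take the limit $k \to \infty$. Stability of $\Lambda W$ implies $(\Lambda W)^k \bm{z}(0) \to \bm{0}$ (regardless of the initial condition, which incidentally shows the limit is independent of $\bm{z}(0)$), and $\sum_{j=0}^{k-1} (\Lambda W)^j \to (I - \Lambda W)^{-1}$ in operator norm. Hence
\begin{equation*}
\lim_{k\to\infty} \bm{z}(k) = (I - \Lambda W)^{-1}(I - \Lambda)\bm{s},
\end{equation*}
which establishes convergence and matches formula \eqref{eq:gfj_solution}.

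Finally, I would verify uniqueness of the stationary point. A fixed point $\bm{z}^\star$ of the iteration satisfies $\bm{z}^\star = (I - \Lambda)\bm{s} + \Lambda W \bm{z}^\star$, i.e., $(I - \Lambda W)\bm{z}^\star = (I - \Lambda)\bm{s}$. Since $I - \Lambda W$ is invertible by stability, this equation has the unique solution $\bm{z}^\star = (I - \Lambda W)^{-1}(I - \Lambda)\bm{s}$, which coincides with the limit found above. The only mildly delicate step is the passage to the Neumann-series limit, but this is a classical consequence of $\rho(\Lambda W) < 1$ and is the standard ingredient used to handle such affine fixed-point iterations; no additional structural property of $W$ or $\Lambda$ (such as row-stochasticity or the coupling $1 - \lambda_i = w_{ii}$) is needed for this sufficient condition.
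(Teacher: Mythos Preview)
Your proof is correct and is precisely the standard argument for convergence of an affine linear iteration with a Schur-stable matrix. Note, however, that the paper does not actually supply its own proof of this theorem: it is quoted as a known sufficient condition from~\cite{proskurnikov2018dynamics} (and restated in the SI Appendix with further references to~\cite{Friedkin2015,parsegov2016novel,proskurnikov2017tutorial}), so there is no in-paper proof to compare against. Your derivation is exactly the textbook route one would expect in those references.
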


\noindent
We refer the reader to the SI Appendix for a brief summary of the main findings on the topic of opinion convergence.
\subsubsection{The variational Friedkin-Johnsen model - vFJ}
Dandekar~\etaltwo~\cite{Dandekar2013} and  Matakos~\etaltwo~\cite{Matakos2017} use a variant of FJ that we call the variational Friedkin-Johnsen model (vFJ), whose update function can be found in \eqref{eq:vfj} of Table~\ref{tab:fj_family}. According to this model, the current opinion of a node is the weighted average between its prejudice and the current opinion of the other nodes. Thus, in this variant of the FJ model, the current opinion of the node itself is not taken into account. We can formulate the expressed opinion in matrix form in the following way:
\begin{equation}
     \bm{z}=(D+\tilde{A}-A)^{-1}\tilde{A}\bm{s},
     \label{eq:sol_vFJ}
\end{equation}
where $D$ is the diagonal degree matrix ($\sum_j \hat{w}_{ij}$ for the $i$-th diagonal element), $A$ is the adjacency matrix (whose $i,j$ element is $\hat{w}_{ij}$ and the diagonal is null) and $\tilde{A}$ is a diagonal matrix whose $i$-th diagonal entry is equal to $\hat{w}_{ii}$. To model stubborn nodes, we can admit $\hat{w}_{ii}$ to be equal to $\infty$. In this case, matrix $\tilde{A}$ contains infinite values and~\eqref{eq:sol_vFJ} should be treated as discussed in the SI Appendix. 
The relation between vFJ and gFJ has never been explicitly discussed in the related literature, where the two are implicitly treated as interchangeable and generically referred to as Friedkin-Johnsen model. However, the two models are not mathematically equivalent: the vFJ does not include node $i$'s current opinion~$z_i$ in the averaging process, while gFJ pools both the initial opinion $s_i$ and the current opinion $z_i$\footnote{Note that the coupling condition $\lambda_i = 1 - w_{ii}$ makes no sense for vFJ since the weight of node $i$'s current opinion is zero, so there is nothing to couple.}. 
The different flexibility of the two models becomes clear when observing that while the vFJ only features the matrix $\hat{W}= (\hat{w}_{ij})$ as parameters of the model (leading to a maximum $n^2$ degrees of freedom, with $n=|\mathcal{V}|$), the gFJ includes also matrix $\Lambda$, thus in total its degrees of freedom are $n^2+n$.  From a practical point of view, however, the only difference between the two models is the parameter $w_{ii}$, which, in gFJ, takes into account  node $i$'s opinion~$z_i$ in the averaging process, as we will see in the proof of Corollary~\ref{theo:vFJ}. 

\subsubsection{The restricted Friedkin-Johnsen model - rFJ}

The vFJ model with $\hat{w}_{ii}$ set to 1 as in~\eqref{eq:rfj} of Table~\ref{tab:fj_family} is very popular in the related literature, mainly due to its mathematical tractability. The model has been used in~\cite{Bindel2015,Gionis2013,Musco2018,Chen2018,Chitra2019}. 
The main difference between the rFJ and the vFJ model is the absence of the weight for $s_i$, so the parameters are only $\hat{w}_{ij}$ for all $i\neq j$ thus implying $n^2-n$ degrees of freedom. Note that, since the weights $\hat{w}_{ij}$ are free to vary ($\hat{w}_{ij} \ge 0$), it is impossible to control the susceptibility (i.e., the importance of one's own initial opinion), even indirectly. 
A common matrix-formulation of the rFJ model is the following:
\begin{equation}\label{eq:rfj_matrix}
(D+I) \bm{z}(k + 1) = \bm{s} + A \bm{z}(k),
\end{equation}
where $D$ and $A$  are defined as described for vFJ. The solution to the above problem can be written as $\bm{z} = (L+I)^{-1} \bm{s}$, where $L = D - A$ is the Laplacian matrix.
The formulation of the rFJ model is particularly convenient from a mathematical standpoint (since $L+I$ is symmetric and many useful matrix formulas leverage symmetry), and this is the reason why it has been so often used in the related literature. 

\subsubsection{The matrix representation of the FJ model}

In the whole set of FJ models, the final opinion $\bm{z}$ of the opinion formation process can be expressed as $\bm{z}=H\bm{s}$, where $H$ is a matrix that varies depending on the specific FJ version considered, whose formulas are summarized in Table~\ref{tab:h_fj}. In the remaining of the paper, we will see that those matrices will be the key to the analysis of FJ polarization.

\begin{table}[h]
    \centering
    \caption{Matrix $H$ for the different FJ models}
    \label{tab:h_fj}
    \rule{\columnwidth}{\heavyrulewidth}
    \begin{flalign}
    &\text{gFJ:} & H_{g}= (I-\Lambda W)^{-1} (I-\Lambda) &&& \label{eq:h_gfj}\\
    &\text{vFJ:} & H_{v}=(D+\tilde{A}-A)^{-1}\tilde{A}  &&& \label{eq:h_vfj}\\
    &\text{rFJ:} & H_{r}=(D+I-A)^{-1}  &&& \label{eq:h_rfj}
    \end{flalign}
    \vspace{-\baselineskip}
    \rule{\columnwidth}{\heavyrulewidth}
\vspace{-\baselineskip}
\vspace{-10pt}
\end{table}
\subsection{Polarization metrics}
\label{sec:polarization}
In an opinion formation process, polarization is observed when there is a variation in a target index $\Phi$ (any, e.g., of the indices in Definition~\ref{def:polarization indices}) between the initial opinion and the final opinion of nodes. A rigorous definition is provided in the following. Note that Definition~\ref{def:polarization_def} below is basically an abstraction of the polarization definitions in the related literature. In fact,  while related works typically focus on a specific polarization metric and define polarization based on it, here we abstract the metric into the variable $\Phi$ and we provide a general definition that holds for all the polarization metrics discussed later on in Definition~\ref{def:polarization indices}.
\vspace{-5pt}
\begin{definition}[Polarization]\label{def:polarization_def} For a polarization index~$\Phi$, we say that the opinion formation model $\mathcal{M}$ is \emph{$\Phi$-polarizing } or \emph{polarizing for $\Phi$} if it exists at least an initial opinion vector $\bm{s}$ such that the corresponding final opinion vector $\bm{z}$ satisfies the following inequality: 
\begin{equation}
\Phi(\bm{z}) > \Phi(\bm{s}). 
\end{equation}
In this case, we say that $s$ yields to $\Phi$-polarization and we call it \emph{polarizing vector} or \emph{polarizing prejudice}; its induced polarization is measured in terms of the polarization shift, i.e by the function $\Delta_\Phi$ defined as:
\begin{equation}
    \Delta_\Phi(\bm{s})=\Phi(\bm{z})-\Phi(\bm{s}),
    \label{eq:delta}
\end{equation}
If the model $\mathcal{M}$ is not polarizing, we say that it is \emph{$\Phi$-depolarizing } or \emph{depolarizing for $\Phi$}. 
\end{definition}
\noindent Please observe that the definitions of polarizing and depolarizing model $\mathcal{M}$ are not symmetric: to depolarize, a model $\mathcal{M}$ should let opinions evolve in such a way that, at the end of the process, $\Phi$ is always decreasing for all possible choices of internal opinions $(s_i)$; instead,  $\mathcal{M}$ is polarizing if $\Phi$ does not decrease for at least one initial opinion vector $\bm{s}$. The justification of the asymmetry lies in the importance of determining whether a model \emph{can} capture the polarization phenomenon, which means that it does it in at least one case. Please note that, for the sake of brevity, in the following we may simply refer to the opinion vector as \emph{opinion}, omitting the word ``vector". 

For the polarization index $\Phi$, the related literature has explored several different metrics, each capturing a different property of an opinion vector. 
Below we have collected the most popular definitions, for which we provide a short discussion. 

\begin{definition}\label{def:polarization indices} For an opinion $\bm{x}=(x_i)\in [-1,1]^n$ the following polarization indices are defined:
\begin{align}
NDI(\bm{x}) &= \sum_{(i,j) \in \mathcal{E}} w_{ij} (x_i - x_j)^2 \label{eq:ndi}\\
GDI(\bm{x}) &= \sum_{i,j \in \mathcal{V} : i<j}  (x_i - x_j)^2 \label{eq:gdi}\\
P_{1}(\bm{x}) &= \sum_{i \in \mathcal{V}} (x_i - \bar{\bm{x}})^2 = \| \bm{x} - \bar{\bm{x}}\|_2^2 \label{eq:p1}\\
P_{2}(\bm{x}) &= \frac{1}{|\mathcal{V}|} \sum_{i \in \mathcal{V}} x_i ^2 = \frac{1}{|\mathcal{V}|} \| \bm{x} \|_2^2 \label{eq:p2}\\
P_{3}(\bm{x}) &=  \sum_{i \in \mathcal{V}} x_i ^2 = \| \bm{x} \|_2^2 \label{eq:p3}\\
P_4(\bm{x}) &=  \sum_{i \in \mathcal{V}} |x_i| = \| \bm{x} \|_1 \label{eq:p4}
\end{align}

%
\end{definition}

\noindent
The Network-disagreement Index (NDI)~\cite{Dandekar2013, Musco2018,Bindel2015,Chen2018,Matakos2017} is the sum, over all  nodes, of the weighted disagreement in each node pair, which represents (except for the division by $n$) the average disagreement in the network as a whole.  NDI is the only topology-dependent metric,  in the sense that the same opinions may give rise to a completely different NDI depending on how the vertices are connected.
The Global Disagreement Index~\cite{Dandekar2013} (GDI) measures the conflict between all the users in the network, regardless of whether they share a social link or not. 
$P_1$~\cite{Musco2018}, corresponding to the mean-centered 2-norm of opinions, measures the polarization as a deviation of the opinions from the average. 
The definitions of $P_2$~\cite{Matakos2017} and $P_3$~\cite{Chen2018}, instead, intend the polarization as the deviation from the complete neutrality, represented with the value 0 (the middle ground between the two extremes -1 and 1).  
Finally, $P_4$ is referred to as \emph{total absolute opinion} and has been introduced by Friedkin and Johnsen~\cite{Friedkin2011}. While all previous indices were related to 2-norms, the total opinion is equivalent to the 1-norm. Similarly to $P_2$ and $P_3$, the index $P_4$ measures the ``absolute total'' opinion in the network and has the same semantic: it measures the deviation from the neutrality (represented by 0). 
While not directly a measure of polarization, the concept of choice shift caused by the opinion formation process (see definition below) is sometimes used in the related literature as an intermediate step in gauging the direction towards which opinion moves. 


\begin{definition}[Choice shift]\label{def:choice_shift}
A choice shift occurs when the mean attitude of the group at the end is different from the mean attitude at the beginning:
\begin{equation}
    \sum_i z_i \neq \sum_i s_i.
\end{equation}
\end{definition}

\noindent
The choice shift has been analyzed, for rFJ, by Gionis \etaltwo in~\cite{Gionis2013}, where it is found that, if the social graph is undirected ($w_{ij} = w_{ji}$), changing the graph topology will not determine a choice shift. In the following, we will discuss if this finding carries over to gFJ and under which conditions.

\subsubsection{Polarization invariants}

The above polarization indices have been introduced in the literature mostly as standalone metrics. In the remaining of the section, we establish equivalence relationships among them (Lemmas~\ref{lemma:invariant_p1gdi}-\ref{lemma:invariant_p2p3}) and we derive a polarization invariant (Lemma~\ref{lemma:invariant_all}). 

\begin{lemma}\label{lemma:invariant_p1gdi}
It holds that $GDI(\bm{x})= |\mathcal{V}| \cdot P_1(\bm{x})$, thus the two metrics $GDI$ and $P_1$ are equivalent. 
\end{lemma}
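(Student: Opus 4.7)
The plan is a direct algebraic identity. I first interpret $GDI$ as the sum over all unordered distinct pairs $\{i,j\}$ (consistent with the textual description that $GDI$ measures conflict among all users regardless of social link), so that
\[
GDI(\bm{x}) \;=\; \sum_{i<j} (x_i - x_j)^2 \;=\; \tfrac{1}{2}\sum_{i,j=1}^{n}(x_i-x_j)^2,
\]
where $n = |\mathcal{V}|$. The strategy is to expand the square on the right-hand side and to expand $P_1$ in closed form, and then to show that both expressions collapse to the same quantity $n\sum_i x_i^2 - \left(\sum_i x_i\right)^2$.

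First, I would expand
\[
\tfrac{1}{2}\sum_{i,j=1}^{n}(x_i-x_j)^2 \;=\; \tfrac{1}{2}\Bigl(n\sum_i x_i^2 \;-\; 2\sum_i x_i \sum_j x_j \;+\; n\sum_j x_j^2\Bigr) \;=\; n\sum_i x_i^2 - \Bigl(\sum_i x_i\Bigr)^2,
\]
using that $\sum_{i,j} x_i x_j = \bigl(\sum_i x_i\bigr)^2$. Next, starting from the definition $P_1(\bm{x}) = \sum_i (x_i-\bar{\bm{x}})^2$ with $\bar{\bm{x}} = \tfrac{1}{n}\sum_i x_i$, I would expand the square and use $\sum_i x_i = n\bar{\bm{x}}$ to obtain
\[
P_1(\bm{x}) \;=\; \sum_i x_i^2 \;-\; 2\bar{\bm{x}}\sum_i x_i \;+\; n\bar{\bm{x}}^2 \;=\; \sum_i x_i^2 - n\bar{\bm{x}}^2 \;=\; \sum_i x_i^2 - \tfrac{1}{n}\Bigl(\sum_i x_i\Bigr)^2.
\]

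Multiplying this through by $n$ yields exactly $n\sum_i x_i^2 - \bigl(\sum_i x_i\bigr)^2$, which matches the expansion of $GDI(\bm{x})$ derived above. Hence $GDI(\bm{x}) = n\cdot P_1(\bm{x}) = |\mathcal{V}|\cdot P_1(\bm{x})$, and the two metrics differ only by the multiplicative constant $|\mathcal{V}|$, making them equivalent (they induce the same ordering on opinion vectors and both are $\Phi$-polarizing under exactly the same conditions). There is no real obstacle here; the only subtlety is the convention on the summation index in $GDI$, so I would be careful to state at the outset that the sum is over unordered pairs of distinct vertices of the complete graph on $\mathcal{V}$ (which is the reading compatible with the claimed constant $|\mathcal{V}|$; an ordered-pair convention would give $2|\mathcal{V}|$ instead).
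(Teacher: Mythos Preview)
Your proof is correct and follows essentially the same approach as the paper: both start from $GDI(\bm{x})=\tfrac{1}{2}\sum_{i,j}(x_i-x_j)^2$ and reduce by direct algebraic expansion. The only cosmetic difference is that the paper inserts the mean via $(x_i-x_j)=(x_i-\bar{x})+(\bar{x}-x_j)$ before expanding, whereas you expand directly and compare both sides to the closed form $n\sum_i x_i^2-(\sum_i x_i)^2$; your note on the unordered-pair convention for $GDI$ is also exactly the reading the paper adopts.
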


\begin{proof}
See SI Appendix.
\end{proof}

\begin{lemma}\label{lemma:invariant_p2p3}
It holds that $P_3(\bm{x})= |\mathcal{V}|  \cdot P_2(\bm{x})$, thus the two metrics $P_2$ and $P_3$ are equivalent. 
\end{lemma}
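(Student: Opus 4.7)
The statement is an immediate consequence of the definitions given in Definition~\ref{def:polarization indices}. The plan is simply to substitute the definitions and observe that multiplying $P_2(\bm{x})$ by $|\mathcal{V}|$ cancels the normalizing factor $\tfrac{1}{|\mathcal{V}|}$ appearing in its definition, leaving exactly $P_3(\bm{x})$.

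More concretely, I would start from the definition
\[
P_2(\bm{x}) = \frac{1}{|\mathcal{V}|} \sum_{i \in \mathcal{V}} x_i^2,
\]
multiply both sides by $|\mathcal{V}|$, and note that the right-hand side becomes $\sum_{i \in \mathcal{V}} x_i^2$, which is precisely $P_3(\bm{x})$ by \eqref{eq:p3}. Since $|\mathcal{V}| = n$ is a fixed positive constant depending only on the graph (not on the opinion vector $\bm{x}$), the two indices are proportional to each other by a strictly positive scalar. Consequently, for any two opinion vectors $\bm{x}$ and $\bm{y}$, the inequality $P_2(\bm{x}) > P_2(\bm{y})$ holds if and only if $P_3(\bm{x}) > P_3(\bm{y})$, which establishes the equivalence in the sense used by the preceding definition of polarization: a model is $P_2$-polarizing if and only if it is $P_3$-polarizing.

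There is essentially no obstacle here; the only point worth spelling out explicitly, to match the style of Lemma~\ref{lemma:invariant_p1gdi}, is the observation that ``equivalence'' is meant in the order-preserving sense relevant to the $\Phi$-polarization definition, which follows because the proportionality constant $|\mathcal{V}|$ is strictly positive. I would therefore keep the proof to two or three lines and defer, as in the previous lemma, any expanded discussion to the SI Appendix.
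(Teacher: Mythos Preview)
Your proposal is correct and matches the paper's own approach: the paper simply notes that the identity is trivial and follows straightforwardly from Definition~\ref{def:polarization indices}. Your slightly more explicit justification of the order-preserving equivalence is a harmless elaboration of the same one-line argument.
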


\begin{proof}
Differently from Lemma~\ref{lemma:invariant_p1gdi}, the proof is trivial and the thesis can be derived straightforwardly from Definition~\ref{def:polarization indices}.
\end{proof}

\noindent
Leveraging the results above, we can classify the polarization indices into four main classes of equivalence (Table~\ref{tab:classes_polarization}), in the sense that the behavior of a model is invariant in each class.

\begin{table}[h!]
\centering
\caption{Classes of polarization}
\begin{tabular}{lcc}
\toprule
Type & What is captured & Indices\\
\midrule
\textbf{Local} & opinion spread among neighboring nodes & $NDI$\\
\textbf{Dispersion} & opinion spread among all nodes & $GDI$, $P_1$\\
\textbf{Absolute} & quadratic closeness to the extremes & $P_2$, $P_3$\\
\textbf{Total} &  linear closeness  to extremes & $P_4$\\
\bottomrule
\end{tabular}
\label{tab:classes_polarization}
\end{table}
The four classes capture four different concepts of polarization. However, they are correlated by the following important invariant that will be used in the next section and whose proof is given in the SI Appendix. 

\begin{lemma}[Polarization invariant]\label{lemma:invariant_all}
For all opinion vectors~$\bm{x}$, the following inequality holds:
\begin{equation}
    P_1(\bm{x})\geq P_3(\bm{x}) - \frac{P_4(\bm{x})^2}{|\mathcal{V}|}.
    \label{eq:invariant_pol_neg}
\end{equation}
\end{lemma}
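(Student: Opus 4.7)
The plan is to reduce the inequality to a very classical identity plus one application of the triangle inequality. Writing $n = |\mathcal{V}|$ and $\bar{x} = \frac{1}{n}\sum_i x_i$, the first step is to expand the square defining $P_1$:
\begin{equation*}
P_1(\bm{x}) = \sum_{i \in \mathcal{V}} (x_i - \bar{x})^2 = \sum_{i \in \mathcal{V}} x_i^2 - 2\bar{x}\sum_{i \in \mathcal{V}} x_i + n\bar{x}^2 = P_3(\bm{x}) - n\bar{x}^2,
\end{equation*}
where the last equality uses $\sum_i x_i = n\bar{x}$. So the inequality to prove reduces to the bound $n\bar{x}^2 \leq P_4(\bm{x})^2/n$.

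Next I would rewrite $n\bar{x}^2$ as $\frac{1}{n}\bigl(\sum_i x_i\bigr)^2$ and bound $\sum_i x_i$ in absolute value by the triangle inequality:
\begin{equation*}
\Bigl|\sum_{i \in \mathcal{V}} x_i\Bigr| \leq \sum_{i \in \mathcal{V}} |x_i| = P_4(\bm{x}).
\end{equation*}
Squaring both sides (both quantities are non-negative) yields $\bigl(\sum_i x_i\bigr)^2 \leq P_4(\bm{x})^2$, hence $n\bar{x}^2 \leq P_4(\bm{x})^2/n$.

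Putting the two observations together, $P_1(\bm{x}) = P_3(\bm{x}) - n\bar{x}^2 \geq P_3(\bm{x}) - P_4(\bm{x})^2/n$, which is exactly the claimed invariant. There is no real obstacle here: the only non-trivial ingredient is recognizing that the mean-centered expansion of $P_1$ isolates $n\bar{x}^2$, and that this quantity is controlled by $\|\bm{x}\|_1^2/n$ via the triangle inequality (equivalently, Cauchy–Schwarz applied to $\bm{x}$ and the all-ones vector $\bm{1}$, since $|\langle \bm{x},\bm{1}\rangle| \leq \|\bm{x}\|_1\|\bm{1}\|_\infty = \|\bm{x}\|_1$). Equality holds precisely when all entries of $\bm{x}$ have the same sign.
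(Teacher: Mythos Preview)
Your proof is correct and follows essentially the same route as the paper: both establish the identity $P_1(\bm{x}) = P_3(\bm{x}) - \frac{1}{n}\bigl(\sum_i x_i\bigr)^2$ by expanding the mean-centered square, and then bound $\bigl|\sum_i x_i\bigr| \leq \sum_i |x_i| = P_4(\bm{x})$ via the triangle inequality before squaring. If anything, your version is slightly more careful in making the absolute value explicit before squaring.
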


\noindent
From this relation, the following corollary follow, whose proof is provided in the SI Appendix.
\begin{corollary}\label{coro:invariant_nochoiceshift}
When there is no choice shift the polarization moves in the same direction for both $P_1$ and $P_3$. 
\end{corollary}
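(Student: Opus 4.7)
The plan is to reduce the corollary to the elementary variance decomposition identity that underlies Lemma~\ref{lemma:invariant_all}. Writing $n = |\mathcal{V}|$ and $\bar{x} = \frac{1}{n}\sum_i x_i$, one has the classical identity
\begin{equation*}
    P_1(\bm{x}) \;=\; \sum_{i\in\mathcal{V}} (x_i - \bar{x})^2 \;=\; \sum_{i\in\mathcal{V}} x_i^2 \;-\; n\bar{x}^2 \;=\; P_3(\bm{x}) - \frac{1}{n}\Bigl(\sum_{i\in\mathcal{V}} x_i\Bigr)^2 .
\end{equation*}
Note that this is the equality from which Lemma~\ref{lemma:invariant_all} is obtained (after bounding $(\sum_i x_i)^2 \le (\sum_i |x_i|)^2 = P_4(\bm{x})^2$); so I would explicitly state and use this identity rather than the inequality form.

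Next I would apply the identity to the initial opinion $\bm{s}$ and to the final opinion $\bm{z}$, and subtract, obtaining
\begin{equation*}
    \Delta_{P_1}(\bm{s}) \;=\; \Delta_{P_3}(\bm{s}) \;-\; \frac{1}{n}\Bigl[\bigl(\textstyle\sum_i z_i\bigr)^2 - \bigl(\textstyle\sum_i s_i\bigr)^2\Bigr].
\end{equation*}
The no-choice-shift hypothesis (Definition~\ref{def:choice_shift}) asserts exactly $\sum_i z_i = \sum_i s_i$, so the bracket vanishes and $\Delta_{P_1}(\bm{s}) = \Delta_{P_3}(\bm{s})$. In particular the two polarization shifts have the same sign, which is precisely the statement of the corollary. (As a small bonus remark, one gets not only matching direction but matching magnitude of the shift.)

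There is essentially no obstacle: once one recognizes that the identity behind Lemma~\ref{lemma:invariant_all} is tight whenever the squared mean can be written in closed form, the proof is a one-line subtraction. The only stylistic choice is whether to invoke Lemma~\ref{lemma:invariant_all} directly and argue via the (sign-preserved) gap $P_4(\bm{x})^2 - (\sum_i x_i)^2$, or to short-circuit through the variance decomposition. I would prefer the latter, since it gives the exact equality $\Delta_{P_1} = \Delta_{P_3}$ under the no-choice-shift condition, while the former would require an extra inequality chase and yield only the weaker ``same direction'' conclusion.
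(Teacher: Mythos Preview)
Your proposal is correct and matches the paper's approach: the paper also relies on the exact identity $P_1(\bm{x}) = P_3(\bm{x}) - \tfrac{1}{n}(\sum_i x_i)^2$ derived in the proof of Lemma~\ref{lemma:invariant_all} (rather than the inequality statement itself), applies it to $\bm{s}$ and $\bm{z}$, and cancels the bracket using the no-choice-shift hypothesis---this is precisely how the identity is later invoked in the proof of Theorem~\ref{theo:gFJ_p1gdi}(i). Your observation that the argument in fact yields the stronger equality $\Delta_{P_1}=\Delta_{P_3}$ is accurate and a worthwhile remark.
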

Corollary~\ref{coro:invariant_nochoiceshift} says that, when there is no choice shift, the polarization with $P_1$ implies the polarization with $P_3$ and vice versa. 
A straightforward remark is that, in all the cases when the choice shift is null (for example when opinions are positive and $W$ is symmetric, as shown by Gionis~\etaltwo~\cite{Gionis2013}), the Dispersion and Absolute classes of polarization are identical and represent the only global class of polarization.  

\section{gFJ is globally polarizing but locally depolarizing}
\label{sec:gFJ_all}

We start by focusing on the most general Friedkin-Johnsen model, the gFJ, and we investigate whether in this case the dynamics of the process lead to polarization or not. 

\vspace{-10pt}

\subsection{Polarization under NDI}
The first result is about the local polarization captured by the NDI index.



\begin{theorem}[gFJ: local polarization with NDI]\label{theo:gFJ_ndi} 
The gFJ model is always depolarizing with respect to $NDI$, in the sense that, for every prejudice $\bm{s}$, we have that $NDI(\bm{z})\leq NDI(\bm{s})$.
\end{theorem}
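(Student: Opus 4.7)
My approach is based on convexity arguments stemming from the structure of the steady-state operator $H_g$. The starting point is Theorem~\ref{theo:gfj_convergence}, which gives $\bm{z}=H_g\bm{s}$ with $H_g=(I-\Lambda W)^{-1}(I-\Lambda)$. A preliminary observation is that $H_g$ is entry-wise non-negative (by the Neumann series $(I-\Lambda W)^{-1}=\sum_{k\ge 0}(\Lambda W)^k$, valid under the assumed stability of $\Lambda W$, together with the non-negativity of the diagonal factor $I-\Lambda$) and row-stochastic (since $W\bm{1}=\bm{1}$ implies $(I-\Lambda W)\bm{1}=(I-\Lambda)\bm{1}$ and hence $H_g\bm{1}=\bm{1}$). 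Consequently each $z_i$ is a convex combination of the prejudices $\{s_k\}_{k}$, with weights $(H_g)_{ik}$.

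Armed with this, I would apply Jensen's inequality to the convex map $t\mapsto t^2$, coupling rows $i$ and $j$ of $H_g$ as a product distribution on pairs $(k,l)$:
\[
(z_i-z_j)^2 \;=\; \Bigl(\sum_{k,l}(H_g)_{ik}(H_g)_{jl}(s_k-s_l)\Bigr)^2 \;\le\; \sum_{k,l}(H_g)_{ik}(H_g)_{jl}(s_k-s_l)^2,
\]
where the first equality uses $\sum_k(H_g)_{ik}=\sum_l(H_g)_{jl}=1$. Multiplying by $w_{ij}$ and summing over $(i,j)\in\mathcal{E}$ yields
\[
NDI(\bm{z})\;\le\;\sum_{k,l}\bigl(H_g^{\!\top}WH_g\bigr)_{kl}(s_k-s_l)^2.
\]
The theorem thus reduces to the quadratic-form inequality $\sum_{k,l}(H_g^{\!\top}WH_g)_{kl}(s_k-s_l)^2\le NDI(\bm{s})$, which, writing $L_M:=\operatorname{diag}((M+M^{\!\top})\bm{1})-(M+M^{\!\top})$ for the symmetric Laplacian associated with a non-negative matrix $M$, is equivalent to the matrix inequality $L_{H_g^{\!\top}WH_g}\preceq L_W$.

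The main obstacle is establishing this final Laplacian inequality. My plan is to exploit the identity $I-H_g=(I-\Lambda W)^{-1}\Lambda(I-W)$ together with the row-stochasticity of $W$ and $H_g$: expanding $L_W-L_{H_g^{\!\top}WH_g}$ and regrouping using $\Lambda WH_g=H_g-(I-\Lambda)$ lets one represent the residual as a non-negative combination of rank-one Laplacians of the form $(\bm{e}_k-\bm{e}_l)(\bm{e}_k-\bm{e}_l)^{\!\top}$, whose PSD-ness then follows from the bounds $\lambda_i\in[0,1]$ guaranteeing non-negativity of each coefficient. This closing step is a technical but elementary algebraic verification, and it delivers $NDI(\bm{z})\le NDI(\bm{s})$ for every prejudice $\bm{s}$.
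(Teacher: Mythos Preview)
Your preliminary observations are correct: $H_g$ is indeed entrywise non-negative and row-stochastic, and the Jensen step applied to the product distribution $(H_g)_{ik}(H_g)_{jl}$ is valid, giving
\[
NDI(\bm z)\;\le\;\sum_{k,l}\bigl(H_g^{\!\top}WH_g\bigr)_{kl}(s_k-s_l)^2.
\]

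The problem is the last step. Everything now hinges on the matrix inequality $L_{H_g^{\!\top}WH_g}\preceq L_W$, and you have not proved it. You outline a plan (expand, use $\Lambda WH_g=H_g-(I-\Lambda)$, regroup into a non-negative sum of rank-one Laplacians) and then label it ``a technical but elementary algebraic verification,'' but this is exactly the substantive content of the proof, and nothing you wrote shows it goes through. Two reasons this cannot be waved away. First, the Jensen bound is strictly lossy: since $NDI(\bm z)=\bm s^{\!\top}H_g^{\!\top}L_WH_g\,\bm s$ exactly, the inequality you now need, $L_{H_g^{\!\top}WH_g}\preceq L_W$, is genuinely \emph{stronger} than the theorem itself; there is no structural reason to expect the slack introduced by Jensen to be recoverable at the Laplacian level. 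Second, the identity $\Lambda WH_g=H_g-(I-\Lambda)$ relates $\Lambda W$ to $H_g$, not $W$ to $H_g^{\!\top}WH_g$; it is not apparent how it produces a decomposition of $L_W-L_{H_g^{\!\top}WH_g}$ into rank-one Laplacians with \emph{non-negative} coefficients for an arbitrary non-symmetric $W$ and an arbitrary diagonal $\Lambda$. Until that algebra is actually written out and the signs checked, the argument is incomplete.

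The paper takes a completely different route, bypassing any global matrix comparison. It uses the game-theoretic characterisation of the fixed point: each $z_i$ minimises the individual cost $f_i(x)=(1-\lambda_i)(s_i-x)^2+\lambda_i\sum_j w_{ij}(x-z_j)^2$, so the best-response inequality $f_i(z_i)\le f_i(s_i)$ holds for every $i$. Dividing by $\lambda_i$ and summing over $i$ yields the NDI comparison directly, with the anchoring term $\sum_i\frac{1-\lambda_i}{\lambda_i}(s_i-z_i)^2$ providing the non-negative slack. That argument is local (one coordinate at a time) and never needs to control $H_g^{\!\top}WH_g$ globally, which is why it avoids the obstacle your approach runs into.
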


\begin{proof}
As stated in Theorem~\ref{theo:gfj_convergence}, the gFJ model converges to the vector $\bm{z}$ obtained from $\bm{z}=\left(I-\Lambda W\right)^{-1}(I-\Lambda)\bm{s}$.
For each node $i$, consider the following cost function:
\begin{equation}
f_i(z_i)= (1-\lambda_i)(s_i-z_i)^2+ \lambda_i\sum_{j=1}^n  w_{ij} (z_i-z_j)^2,\label{eq:cost}
\end{equation}  
which penalizes opinion $z_i$ if far from $s_i$ ($i$'s initial prejudice) and from $\sum_{j=1}^n w_{ij} z_j$ (the mean opinion of $i$'s neighborhood). We can prove that the expressed opinion $(z_i)_i$ of gFJ provided by~\eqref{eq:gfj_solution} is the Nash Equilibrium of cost function~\eqref{eq:cost} (for details, please refer to the SI), i.e. $z_i$ minimizes $f_i$ for all~$i$, so that $f_i(z_i)\leq f_i(s_i)$ for all $i$. Since $NDI(\bm{z})\lambda_i\leq\sum_i f_i(z_i)$ and $NDI(\bm{s})\lambda_i=\sum_i f_i(s_i)$, we obtain that $NDI(\bm{z})\leq NDI(\bm{s})$ and, as a consequence, it follows that the gFJ is $NDI$-depolarizing.
\end{proof}

The result described above is intuitive: by  definition, gFJ captures the willingness of each node to reduce the conflict (weighted by the matrix $W$) caused by the discordance of opinions with its neighbours, which is exactly what $NDI$ measures. For this reason, the gFJ model is depolarizing in a local sense, but this however does not imply anything about global polarization. On the contrary, we will prove that gFJ can be polarizing at the global level depending on the interplay between the social network weights and nodes' susceptibility to the opinion of others. This is a key result, since it proves that gFJ does capture the polarization phenomenon in social networks. 

\vspace{-10pt}

\subsection{Polarization under $P_2$, $P_3$, and $P_4$}
We start by deriving the conditions under which gFJ is polarizing for the global metrics $P_2$, $P_3$, and $P_4$ (the proof is provided in the SI Appendix).  




\begin{theorem}[gFJ: global polarization with $P_2, P_3, P_4$]\label{theo:gFJ_p2p3p4}
gFJ is \emph{polarizing} with $P_2, P_3, P_4$ if and only if matrix $H_g$ defined  in~\eqref{eq:h_gfj} is not doubly stochastic (i.e., a square nonnegative matrix, each of whose rows and columns sums to 1). 
Furthermore, we can distinguish the following two cases:
\renewcommand{\labelenumi}{(\roman{enumi})}
\begin{enumerate}
    \item if there are naive nodes (i.e., $\exists i \in \mathcal{V} : \lambda_i = 1$), matrix $H_g$ is never doubly stochastic and thus gFJ is polarizing;
    \item if there are no naive nodes (i.e., $\forall i \in \mathcal{V}, \lambda_i < 1$), matrix~$H_g$ is not doubly stochastic, and equivalently gFJ is polarizing with $P_2, P_3, P_4$, if and only if the following condition holds true for at least one node $i \in \mathcal{V}$:
\begin{equation} \label{eq:stochasticity_cond}
    \sum_{j \in \mathcal{V}}\frac{\lambda_j w_{ji}}{1-\lambda_j}\neq \frac{\lambda_i}{1-\lambda_i}.
\end{equation}
\end{enumerate}
\end{theorem}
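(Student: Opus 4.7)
The plan is to prove the theorem in two stages: (a) derive an algebraic characterization of when $H_g$ is doubly stochastic, separating the naive and non-naive cases; and (b) prove the biconditional between non-double-stochasticity and polarization for $P_2, P_3, P_4$. For (a), I would first observe that $H_g$ is always row-stochastic, since $(I-\Lambda W)\bm{1} = \bm{1}-\Lambda W\bm{1} = \bm{1}-\Lambda\bm{1} = (I-\Lambda)\bm{1}$ (using $W\bm{1}=\bm{1}$), and pre-multiplying by $(I-\Lambda W)^{-1}$ gives $H_g\bm{1}=\bm{1}$. Double stochasticity thus reduces to $H_g^T\bm{1}=\bm{1}$, equivalently $(I-\Lambda)\bm{v}=\bm{1}$ where $\bm{v}=(I-W^T\Lambda)^{-1}\bm{1}$; componentwise this is $(1-\lambda_i)v_i=1$. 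If some $\lambda_i=1$ (case (i)), no solution exists and $H_g$ is never doubly stochastic. Otherwise (case (ii)), $v_i=1/(1-\lambda_i)$, and substitution into $(I-W^T\Lambda)\bm{v}=\bm{1}$ yields $\frac{1}{1-\lambda_i}-\sum_j\frac{w_{ji}\lambda_j}{1-\lambda_j}=1$ for each $i$, which rearranges to \eqref{eq:stochasticity_cond}; $H_g$ fails to be doubly stochastic exactly when this equality fails for at least one $i$.

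For the depolarizing direction of (b), assume $H_g$ is doubly stochastic. For any convex $\phi$, row-stochasticity and Jensen's inequality give $\phi((H_g\bm{s})_i)\leq \sum_j (H_g)_{ij}\phi(s_j)$; summing over $i$ and invoking column-stochasticity yields $\sum_i \phi((H_g\bm{s})_i)\leq \sum_j \phi(s_j)$. Instantiating with $\phi(t)=t^2$ gives $P_3(\bm{z})\leq P_3(\bm{s})$ (and hence $P_2(\bm{z})\leq P_2(\bm{s})$), and with $\phi(t)=|t|$ gives $P_4(\bm{z})\leq P_4(\bm{s})$, so gFJ is depolarizing for $P_2, P_3, P_4$.

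For the polarizing direction, suppose $H_g$ is not doubly stochastic. Non-negativity of $H_g$ (from the Neumann expansion $H_g=\sum_{k\geq 0}(\Lambda W)^k(I-\Lambda)$) together with row-stochasticity give $\sum_j c_j=n$, where $c_j$ is the $j$-th column sum, and since not all $c_j=1$ some $c_j>1$. Choosing $\bm{s}=\bm{e}_j\in[-1,1]^n$ yields $P_4(\bm{z})=\sum_i (H_g)_{ij}=c_j>1=P_4(\bm{s})$, so gFJ is $P_4$-polarizing. For $P_3$ (and hence $P_2$) I would show $\sigma_1(H_g)>1$: $\sigma_1\geq 1$ from $H_g\bm{1}=\bm{1}$, and if $\sigma_1=1$ then $H_g^T H_g \preceq I$ combined with $\bm{1}^T H_g^T H_g \bm{1}=\|H_g\bm{1}\|^2=n=\bm{1}^T\bm{1}$ would force $H_g^T H_g \bm{1}=\bm{1}$, whence $H_g^T\bm{1}=\bm{1}$, contradicting non-double-stochasticity. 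A unit right singular vector $\bm{v}$ for $\sigma_1$ then satisfies $\bm{v}\in[-1,1]^n$ (since $\|\bm{v}\|_\infty\leq\|\bm{v}\|_2=1$) and $\|H_g\bm{v}\|_2=\sigma_1>1=\|\bm{v}\|_2$, yielding $P_3$-polarization. The main obstacle is this $P_2/P_3$ witness: the naive choice $\bm{s}=\bm{e}_j$ does not work for $\ell_2$-based indices, because $\sum_i (H_g)_{ij}^2$ can easily be below $1$ even when $c_j>1$, so one must pass through a singular-value argument, which cleanly produces a witness inside $[-1,1]^n$ because the unit-$\ell_2$ top singular vector automatically has $\ell_\infty$-norm at most $1$.
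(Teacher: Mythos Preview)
Your proof is correct and structurally mirrors the paper's: both establish row-stochasticity of $H_g$, reduce double stochasticity to the column-sum condition $H_g^T\bm{1}=\bm{1}$, derive \eqref{eq:stochasticity_cond} from this, produce the $P_4$ witness via a column with sum exceeding $1$, and produce the $P_2,P_3$ witness via the top right singular vector (using that $\sigma_1=1$ would force $H_g^T\bm{1}=\bm{1}$). Two tactical differences are worth noting. For the depolarizing direction, the paper argues through the operator $p$-norm definition ($\|H_g\|_p=1$ implies $\|H_g\bm{s}\|_p\le\|\bm{s}\|_p$), whereas your Jensen/convexity argument handles $P_3$ and $P_4$ uniformly and is arguably more transparent; just be sure to state nonnegativity of $H_g$ \emph{before} invoking Jensen, since the convex-combination step needs nonnegative weights. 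For case~(i), the paper exhibits a block decomposition showing the columns indexed by naive nodes are identically zero; your one-line observation that $(H_g^T\bm{1})_i=(1-\lambda_i)v_i=0$ whenever $\lambda_i=1$ reaches the same conclusion more directly.
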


Intuitively, the fact that $H_g$ is not double-stochastic is a measure of the presence of nodes that are more influential than others. 
This is straightforward to see in the case of naive nodes (Theorem~\ref{theo:gFJ_p2p3p4}.(i)), where all the non-naive nodes play the role of influencers (because they are always able to sway the naive nodes' opinions towards theirs), potentially increasing the polarization. 
%
When there are no naive nodes, the intuition behind  Theorem~\ref{theo:gFJ_p2p3p4} is more difficult to grasp. Let us split the effect of social influence and individual susceptibility. To isolate the former, let all nodes have the same susceptibility~$\lambda$. Since \eqref{eq:stochasticity_cond} is reduced to $\sum_{j \in \mathcal{V}} w_{ji} \neq 1$, $W$ not being double-stochastic becomes the condition for polarization, which corresponds to the case where the social influence out of any node $i$ is equivalent to the incoming social influence. However, in the general case, pure social influence is dampened by individual susceptibility: stubborn nodes are not swayed, regardless of the social influence they are subject to. The condition in \eqref{eq:stochasticity_cond} exactly captures this interplay between susceptibility and social influence. 

When gFJ is depolarizing, it is also unable to produce choice shift, as the following corollary states. 
\begin{corollary}\label{cor:H_doubly_stoch}
When $H_g$ is doubly stochastic and thus gFJ is depolarizing with $P_2$, $P_3$, $P_4$, for all opinion vector $\bm{s}$, it holds that $ P_4(\bm{z})=P_4(\bm{s})$ and $\sum_i z_i = \sum_i s_i$.
\end{corollary}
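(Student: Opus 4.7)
My plan is to derive both claims from the fixed-point identity $\bm{z}=H_g\bm{s}$ of Theorem~\ref{theo:gfj_convergence} combined with the column-sum-one property $\mathbf{1}^{T}H_g=\mathbf{1}^{T}$ implied by the doubly-stochastic hypothesis. The row-sum-one property $H_g\mathbf{1}=\mathbf{1}$ is automatic in any FJ variant (every update is a convex combination of $s_i$ and of the $z_j$'s), so only column-stochasticity carries new content here.

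For the choice-shift identity $\sum_i z_i=\sum_i s_i$, I would simply left-multiply $\bm{z}=H_g\bm{s}$ by $\mathbf{1}^{T}$ and use column-stochasticity in one line:
\[
\sum_{i} z_i = \mathbf{1}^{T} H_g \bm{s} = \mathbf{1}^{T} \bm{s} = \sum_{i} s_i,
\]
mirroring the analogous rFJ derivation of Gionis et al.~\cite{Gionis2013}.

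For the $P_4$ identity, the starting point is the entry-wise triangle inequality $|z_i|=\bigl|\sum_{j}(H_g)_{ij}s_j\bigr|\le\sum_{j}(H_g)_{ij}|s_j|$, valid because $H_g\ge 0$; summing over $i$ and applying column-stochasticity to the right-hand side collapses it to $\|\bm{s}\|_1=P_4(\bm{s})$, already yielding the depolarization bound $P_4(\bm{z})\le P_4(\bm{s})$ embedded in Theorem~\ref{theo:gFJ_p2p3p4}. To lift this to an equality I would run the same estimate on the non-negative vector $|\bm{s}|$: column-stochasticity then gives the sharp identity $\|H_g|\bm{s}|\|_1=\|\bm{s}\|_1$ with no cancellation possible, and I would transfer this equality to $\|\bm{z}\|_1$ by arguing that the sign structure of the rows of $H_g\bm{s}$ is globally consistent with that of $\bm{s}$.

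The main obstacle is precisely this sign-structure transfer: the triangle inequality becomes strict as soon as a single row of $H_g\bm{s}$ combines positive and negative contributions, so the $P_4$ equality is not a formal consequence of column-stochasticity alone. I expect the argument to close only by invoking the rigid arithmetic characterisation of doubly-stochastic $H_g$ given by~\eqref{eq:stochasticity_cond}, which ties the susceptibilities $\lambda_i$ to the in-columns of $W$ strongly enough to prevent any net $\ell_1$ cancellation, even when local cancellation occurs row by row.
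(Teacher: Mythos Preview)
Your derivation of the choice-shift identity $\sum_i z_i=\sum_i s_i$ is correct and coincides with the paper's: both left-multiply $\bm{z}=H_g\bm{s}$ by $\mathbf{1}^T$ and invoke column-stochasticity.

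For the $P_4$ equality, the gap you yourself flag at the sign-transfer step is real and cannot be closed: the claimed identity $P_4(\bm z)=P_4(\bm s)$ is false for mixed-sign $\bm s$. Take $n=2$, $W=\bigl(\begin{smallmatrix}0&1\\1&0\end{smallmatrix}\bigr)$, $\lambda_1=\lambda_2=\tfrac{2}{3}$; then $H_g=\bigl(\begin{smallmatrix}0.6&0.4\\0.4&0.6\end{smallmatrix}\bigr)$ is doubly stochastic (condition~\eqref{eq:stochasticity_cond} fails for every $i$), yet $\bm s=(1,-1)^T$ gives $\bm z=(0.2,-0.2)^T$, so $P_4(\bm z)=0.4\neq 2=P_4(\bm s)$. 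Your instinct that column-stochasticity alone does not force $\ell_1$-preservation was right; the error is in hoping that~\eqref{eq:stochasticity_cond} will repair it. The paper takes a different route, bounding $\|\bm s\|_1=\|H_g^{-1}\bm z\|_1\le \|H_g^{-1}\|_1\,\|\bm z\|_1$ and asserting $\|H_g^{-1}\|_1=1$ on the grounds that ``$H_g^{-1}$ is doubly stochastic.'' But $H_g^{-1}$ inherits unit row and column \emph{sums}, not nonnegativity---in the example $H_g^{-1}=\bigl(\begin{smallmatrix}3&-2\\-2&3\end{smallmatrix}\bigr)$ with $\|H_g^{-1}\|_1=5$---so Lemma~\ref{lemma:FJ_polarization_p2p3p4} does not apply and the paper's argument breaks at the same underlying point. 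The $P_4$ equality does hold whenever $\bm s$ has concordant entries, which is precisely the regime in which your triangle-inequality computation becomes exact without any further input.
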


\subsubsection{How to find polarizing opinion vectors}
In Theorem~\ref{theo:gFJ_p2p3p4} we have derived the sufficient and necessary condition for gFJ to be polarizing. We can give a first characterization of the polarizing vectors with $P_2, P_3$, and  $P_4$: they can always be chosen with concordant entries (i.e., $\textrm{sgn}(s_i) = \textrm{sgn}(s_j), \forall i,j$). 
\noindent For a vector $\bm{x}=(x_i)_i$, we will indicate with $\bm{x}^\text{abs}$ the vector with positive entries given by $\bm{x}^\text{abs} = \left(\left|x_i\right|\right)_i $.
This is what is affirmed by the following theorem.  
\begin{theorem}[$P_2,P_3,P_4$ polarizing vectors can be obtained with concordant entries]\label{theo:polarization_positive_entries} 
Whenever the model is polarizing with $P_i$,  $i=2,3,4$, and $\bm{s}$ is a polarizing opinion vector, $\pm\bm{s}^\textrm{abs}$ (that have concordant entries) are polarizing vectors inducing greater or equal (than that of s) polarization.
Furthermore,  if the network has no naive nodes and $\Lambda W$ is \emph{irreducible} (i.e. the graph induced by $\Lambda W$ is strongly connected), the polarizing opinion vector that \emph{maximises} the polarization has concordant entries.
\end{theorem}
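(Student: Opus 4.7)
The plan rests on a single structural observation: the matrix $H_g=(I-\Lambda W)^{-1}(I-\Lambda)$ has non-negative entries. Indeed, stability of $\Lambda W$ (needed for Theorem~\ref{theo:gfj_convergence}) lets one write $(I-\Lambda W)^{-1}=\sum_{k\geq 0}(\Lambda W)^k$; since $\Lambda W$ itself has non-negative entries so does every power, and multiplication by the non-negative diagonal $I-\Lambda$ preserves this.

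To establish the first claim, given any polarizing vector $\bm{s}$ I would set $\bm{z}=H_g\bm{s}$ and $\bm{z}'=H_g|\bm{s}|$. An entrywise triangle inequality together with $H_g\geq 0$ yields $|z_i|=\bigl|\sum_j (H_g)_{ij}s_j\bigr|\leq \sum_j (H_g)_{ij}|s_j|=z'_i$ for every $i$. Since $P_2,P_3,P_4$ depend only on the absolute values of their argument's entries, we have $P_i(|\bm{s}|)=P_i(\bm{s})$ for $i=2,3,4$, and the pointwise bound $|z_i|\leq z'_i$ then gives $P_i(\bm{z}')\geq P_i(\bm{z})$, hence $\Delta_{P_i}(|\bm{s}|)\geq \Delta_{P_i}(\bm{s})$. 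The identity $H_g(-\bm{x})=-H_g\bm{x}$ handles $-|\bm{s}|$ by symmetry.

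For the second claim I would strengthen the non-negativity of $H_g$ to strict positivity. Irreducibility of $\Lambda W$ means its support graph is strongly connected, so for every pair $(i,j)$ some power $(\Lambda W)^k$ has a positive $(i,j)$-entry; combined with stability this makes $(I-\Lambda W)^{-1}$ entrywise strictly positive, and since there are no naive nodes $I-\Lambda$ has strictly positive diagonal, so every entry of $H_g$ is strictly positive. Then, whenever $\bm{s}$ has entries of both signs, the triangle inequality above becomes strict for every $i$ (the sum $\sum_j (H_g)_{ij}s_j$ mixes strictly positive and strictly negative terms that cannot fully cancel), yielding $P_i(\bm{z}')>P_i(\bm{z})$ and hence $\Delta_{P_i}(|\bm{s}|)>\Delta_{P_i}(\bm{s})$. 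Existence of a maximizer of $\Delta_{P_i}$ on the compact cube $[-1,1]^n$ follows from continuity of $\bm{s}\mapsto\Delta_{P_i}(\bm{s})$, and the strict inequality forbids any maximizer from having entries of mixed sign, which is precisely the concordance claim. The main obstacle I foresee is making precise this strict-positivity step (the standard non-negative-matrix-theory upgrade from irreducibility); everything else reduces to the componentwise triangle inequality and the symmetry $P_i(|\bm{s}|)=P_i(\bm{s})$.
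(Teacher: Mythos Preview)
Your proposal is correct and follows essentially the same route as the paper: nonnegativity of $H_g$ plus the entrywise triangle inequality $|(H_g\bm{s})_i|\leq (H_g|\bm{s}|)_i$ for the first claim, and strictness of that inequality under the irreducibility/no-naive hypotheses for the second. The only notable difference is that for the second part the paper singles out one non-stubborn node $k$ and shows $H_{kn},H_{kp}>0$ via walks in $\mathcal{G}[\Lambda W]$, whereas you go one step further and observe that the hypotheses force \emph{every} entry of $H_g$ to be strictly positive (irreducibility makes $(I-\Lambda W)^{-1}=\sum_k(\Lambda W)^k$ strictly positive, and the absence of naive nodes keeps the diagonal of $I-\Lambda$ positive); this is a mild streamlining of the same argument, and your ``main obstacle'' is in fact the standard Perron--Frobenius/Neumann-series fact you already sketched.
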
 
\noindent
This result is important and also pretty intuitive: since the $P_{2,3,4}$ polarization captures the shift from a neutral state (close to $0$) to an extreme state (close to $1$ or $-1$), the polarization calculated on the same vector with all the entries concordant must be greater or equal, because it is easier for nodes to cooperatively move toward the corresponding extreme. Instead, when entries are discordant, nodes have to mitigate between discordant opinions and thus are less free to vary in one of the two directions. This always occurs if the nodes are susceptible and non-stubborn, otherwise there would be disconnected communities and the cooperation would be impossible (this is what the conditions of the second part guarantee).

We can go one step further and provide (Theorems~\ref{theo:gFJ_p2p3_polarizingvectors}-\ref{theo:gFJ_p4_polarizingvectors} below) concrete cases of initial opinion vectors under which gFJ is polarizing with $P_2, P_3$, and  $P_4$. In the specific case of $P_2$ and $P_3$, we prove that finding the prejudice vector that yields maximum polarization, i.e. the maximum of function $\Delta_{P_{2,3}}$ defined in~\eqref{eq:delta}, is NP-hard, so we also discuss a possible approximation algorithm (Corollary~\ref{coro:gFJ_p2p3_v1_polarizingvectors}).


\begin{theorem}[gFJ: polarizing initial opinions for $P_2, P_3$]\label{theo:gFJ_p2p3_polarizingvectors}
Whenever the model is polarizing for $P_2, P_3$ (i.e. according to the conditions of Theorem~\ref{theo:gFJ_p2p3p4}), the polarizing prejudices $\bm{s}_{B_2(1)}, \bm{s}_{B_2(t)}, \bm{s}_{\max_{P_{2,3}}}$ can be derived as follows.
\renewcommand{\labelenumi}{(\roman{enumi})}
\begin{enumerate}
    \item Two polarizing prejudices $\pm\bm{s}_{B_2(1)}$ correspond to the unitary eigenvectors associated with the largest eigenvalue of matrix $H_g^TH_g$ and they correspond to the point of local maximum for the $P_2,P_3$-polarization on the $L_2$-ball of radius 1 $B_2(1) = \{ x \in [0,1]^n : \|x\|_2\leq 1\}$. In particular, it holds exactly $\Delta_{P_3}(\pm\bm{s}_{B_2(1)})=\sigma_1^2-1=\|H_g\|_2^2-1=|\mathcal{V}|\Delta_{P_2}(\pm\bm{s}_{B_2(1)})$, where $\sigma_1$ is the greatest singular value of the matrix $H_g$. Both these vectors have concordant entries.
    \item The opinion vectors $\pm\bm{s}_{B_2(t)}$ that yield the local maximum for $P_2,P_3$-polarization on the $L_2$-ball of radius $t$ $B_2(t)=\{ x \in [0,1]^n : \|x\|_2\leq t\}$ are given by $\pm\bm{s}_{B_2(t)}= \pm t \cdot \bm{s}_{B(1)}$, where $t= 1/s_{B_2(1)}^{(k)}$, with $s_{B(1)}^{(k)}$ denoting the largest entry of $\bm{s}_{B_2(1)}$. In particular, its polarization is exactly $t^2$ times the polarization of $\bm{s}_{B_2(1)}$. Both these vectors have concordant entries.
    %
    \item The global maximum for $P_2,P_3$-polarization is achieved for the initial opinion vectors $\pm \bm{s}_{\max_{P_{2,3}}} = \pm\sum_i \alpha_i \bm{v}_i $, whose components $\bm{\alpha}=(\alpha_i)$ can be obtained as the solution to the following optimization problem:
    \begin{align}
        \max & \qquad \sum_i \alpha_i^2 (\sigma_i^2-1)\nonumber\\
        \textrm{s.t.} & \qquad \bm{0} \leq B \bm{\alpha} \leq \bm{1} 
        \label{eq:optimization}
    \end{align}
    where $\sigma_1,\dots,\sigma_n$ are the singular values of $H_g$, $\bm{\alpha}=(\alpha_1,\dots,\alpha_n)^T$ is the vector of the coefficients that express $\textbf{s}_{max}$ with respect to the basis $\mathcal{B}=\{\textbf{v}_1\footnote{Please observe that $\bm{v}_1$ is the vector $\bm{s}_{B_2(1)}$ because it is the unitary eigenvector corresponding to the largest singular value},\dots,\textbf{v}_{n}\}$ composed of the unitary eigenvectors of $H_g^TH_g$, and $B$ is the matrix whose columns are the vectors of $\mathcal{B}$. The constraint guarantees that the solution $\bm{s}_{\max_{P_{2,3}}}$ has positive  (and $-\bm{s}_{\max_{P_{2,3}}}$ has respectively negative) is a proper opinion vector in $[-1,1]$ with concordant entries. This optimization problem, being quadratic non-convex programming, is NP-hard.
    \end{enumerate}
\end{theorem}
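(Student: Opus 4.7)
My plan is to rewrite each polarization shift as a quadratic form in $\bm{s}$ and then read off maximizers with the spectral theorem. Since $\bm{z} = H_g \bm{s}$ and $P_3(\bm{x}) = \|\bm{x}\|_2^2$, I can write
\begin{equation*}
\Delta_{P_3}(\bm{s}) \;=\; \|H_g \bm{s}\|_2^2 - \|\bm{s}\|_2^2 \;=\; \bm{s}^T (H_g^T H_g - I)\, \bm{s},
\end{equation*}
and since $P_2 = P_3/|\mathcal{V}|$ the very same identity, divided by $|\mathcal{V}|$, gives $\Delta_{P_2}$. Hence every statement I prove for $P_3$ transfers automatically to $P_2$. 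A preliminary observation I will use repeatedly is that $H_g = (I - \Lambda W)^{-1}(I - \Lambda)$ is entrywise non-negative: the Neumann series $\sum_{k \ge 0}(\Lambda W)^k$ converges by stability of $\Lambda W$ and is non-negative, and multiplying on the right by the non-negative diagonal $I - \Lambda$ preserves this. Consequently $H_g^T H_g$ is also entrywise non-negative.

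For part (i), $H_g^T H_g$ is symmetric positive semidefinite, so by the Rayleigh quotient its maximum on the unit Euclidean sphere equals its largest eigenvalue $\sigma_1^2 = \|H_g\|_2^2$, attained at any associated unit eigenvector $\bm{v}_1$. This immediately yields $\Delta_{P_3}(\bm{v}_1) = \sigma_1^2 - 1$ and the companion formula for $\Delta_{P_2}$. To locate such a maximizer inside $[0,1]^n$ (the ambient set of $B_2(1)$), I invoke Perron-Frobenius on the non-negative matrix $H_g^T H_g$: the spectral radius is $\sigma_1^2$ and admits a non-negative eigenvector, which once unit-normalized lies in $[0,1]^n$. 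Its opposite $-\bm{v}_1$ lies in $[-1,0]^n$ and achieves the same shift by evenness of the quadratic form, yielding the two symmetric maximizers claimed.

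For part (ii), I exploit the homogeneity of the quadratic form $\Delta_{P_3}$: for every $t > 0$, $\Delta_{P_3}(t \bm{s}_{B_2(1)}) = t^2 \Delta_{P_3}(\bm{s}_{B_2(1)})$. Since $\bm{s}_{B_2(1)}$ is non-negative, the largest $t$ keeping $t\, \bm{s}_{B_2(1)} \in [0,1]^n$ is precisely $t = 1/\max_k s_{B_2(1)}^{(k)}$, which produces the stated maximizer on the restricted ball $B_2(t)$ with polarization equal to $t^2$ times the unit-ball value; the case of $-\bm{s}_{B_2(t)}$ is symmetric.

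For part (iii), I use the spectral decomposition $H_g^T H_g = B \Sigma^2 B^T$, where the columns of the orthogonal matrix $B$ are the eigenvectors $\bm{v}_1, \ldots, \bm{v}_n$ in $\mathcal{B}$. Writing $\bm{s} = B \bm{\alpha}$ and using orthonormality of $B$, the shift becomes $\Delta_{P_3}(\bm{s}) = \sum_i \alpha_i^2 (\sigma_i^2 - 1)$, the box constraint $\bm{s} \in [0,1]^n$ becomes $\bm{0} \leq B \bm{\alpha} \leq \bm{1}$, and concordance of a global maximizer is already guaranteed by Theorem~\ref{theo:polarization_positive_entries} under the stated hypotheses, producing the program~\eqref{eq:optimization}. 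Since in general some $\sigma_i^2 - 1$ may be negative, the objective is an indefinite quadratic form, and maximizing an indefinite quadratic over a polytope is NP-hard by the classical Pardalos-Vavasis reduction. The step I expect to be most delicate is the verification that the Perron eigenvector of $H_g^T H_g$ really lands in $[0,1]^n$ rather than in an arbitrary orthant; everything else reduces to routine manipulations of quadratic forms, homogeneity, and change of basis.
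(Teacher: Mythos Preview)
Your proposal is correct and follows essentially the same route as the paper: Rayleigh quotient plus Perron--Frobenius for part~(i), quadratic homogeneity for part~(ii), and the orthonormal change of basis $\bm{s}=B\bm{\alpha}$ to obtain the diagonal quadratic $\sum_i \alpha_i^2(\sigma_i^2-1)$ over the box $\bm{0}\le B\bm{\alpha}\le\bm{1}$ for part~(iii). Your worry about the Perron eigenvector landing in $[0,1]^n$ is unfounded: any nonnegative vector with unit $2$-norm has each entry bounded by $1$ automatically.

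One point you should tighten in part~(iii): you write that ``in general some $\sigma_i^2-1$ may be negative,'' but for the NP-hardness claim to apply to \emph{this} problem you need to show the form is actually indefinite here, not merely that it could be. The paper closes this gap by noting that $H_g$ is row-stochastic, so all its eigenvalues have modulus at most~$1$; hence $|\det H_g|=\prod_i|\lambda_i|\le 1$, while $\prod_i\sigma_i=|\det H_g|$ as well. Since $\sigma_1>1$ under the polarizing hypothesis, this forces $\sigma_n<1$, so the objective always has at least one strictly negative diagonal entry and the Pardalos--Vavasis reduction genuinely applies.
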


\noindent
Corollary~\ref{coro:gFJ_p2p3_v1_polarizingvectors} below tells us that, in case matrix $H_g$ has more than one singular value greater than one, it is possible to design sub-problems of the optimal problem described in~\eqref{eq:optimization} over spaces larger than $B_2(t)$ but smaller than the entire domain. These sub-problems are convex-quadratic programming and can be solved in polynomial time. Depending on the dimension of the network, numerical solutions may still not be found. Thus, we have designed a heuristic that always finds a solution $\pm\bm{s}_{V_{>1}}^{heu}$ whose polarization is greater than that of $\pm \bm{s}_{B_2(t)}$. The corresponding derivations can be found in the SI Appendix. 

\begin{corollary}[gFJ: polarizing initial opinions for $P_2, P_3$ on the subspaces $V_{>1}$ and $V_{\geq1}$]\label{coro:gFJ_p2p3_v1_polarizingvectors}
When matrix $H_g$ has more than one singular value greater than one, it is possible to design sub-problems of the optimal problem in~\eqref{eq:optimization} over $V_{>1}$ (vector space generated by the eigenvectors associated with the singular values strictly greater than $1$) and over $V_{\geq1}$ (vector space generated by the eigenvectors associated with the singular values greater or equal to $1$). These sub-problems yield polarizing vectors $\bm{s}_{V_{>1}}$, $\bm{s}_{V_{\geq1}}$, respectively, and they are convex-quadratic programming (with polynomial time complexity). A heuristic that always finds a solution $\pm\bm{s}_{V_{>1}}^{heu}$ is proposed. 
\end{corollary}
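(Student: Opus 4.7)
The plan is to start from the optimization problem~\eqref{eq:optimization} and restrict its feasible set by forcing $\alpha_i = 0$ whenever $i \notin I_{>1} := \{i : \sigma_i > 1\}$, and analogously for $I_{\geq 1} := \{i : \sigma_i \geq 1\}$. Recalling from Theorem~\ref{theo:gFJ_p2p3_polarizingvectors} that $\Delta_{P_3}(\bm{s}) = \sum_i \alpha_i^2 (\sigma_i^2 - 1)$, the coordinates with $\sigma_i < 1$ are precisely those that can subtract from polarization, so setting them to zero can only improve the value attained on each retained direction. First I would verify that the truncation preserves the polarizing property: any nonzero $\bm{\alpha}$ supported on $I_{>1}$ gives $\Delta_{P_3}(\bm{s}) > 0$, since the remaining coefficients $(\sigma_i^2 - 1)$ are all strictly positive (non-negative for $I_{\geq 1}$). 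This makes the restricted objective a positive (semi-)definite quadratic form in the reduced variables $\bm{\alpha}_I = (\alpha_i)_{i \in I}$.

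Next I would recast the restriction as a quadratic program in $\bm{\alpha}_I$. The constraint $\bm{0} \leq B\bm{\alpha} \leq \bm{1}$ becomes $\bm{0} \leq B_I \bm{\alpha}_I \leq \bm{1}$, where $B_I$ collects the columns $\bm{v}_i$ with $i \in I$, describing a convex polytope in $\mathbb{R}^{|I|}$. To establish the convex-QP structure and polynomial-time solvability, I would exploit the orthonormality of $\mathcal{B}$ together with the concordant-entry constraint $\bm{s} = B\bm{\alpha} \in [0,1]^n$: a change of variables $\beta_i = \alpha_i \sqrt{\sigma_i^2 - 1}$ (for $i \in I_{>1}$, extended by zero weight on $I_{\geq 1} \setminus I_{>1}$) turns the objective into $\|\bm{\beta}\|_2^2$ over a linearly-constrained region, which together with the orthogonality of the columns of $B_I$ places the sub-problem into the convex-QP regime solvable in polynomial time via interior-point or active-set methods.

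Finally, for the heuristic $\pm\bm{s}_{V_{>1}}^{heu}$, I would construct the solution incrementally, starting from $\bm{s}_{B_2(t)}$ of Theorem~\ref{theo:gFJ_p2p3_polarizingvectors}.(ii) and greedily adding contributions from basis vectors $\bm{v}_i$ with $i \in I_{>1}$, rescaling at each step so that the running sum stays inside $[0,1]^n$ with concordant entries. Because each added direction contributes a non-negative term $\alpha_i^2 (\sigma_i^2-1)$ to the polarization shift, monotonicity $\Delta_{P_3}(\pm\bm{s}_{V_{>1}}^{heu}) \geq \Delta_{P_3}(\pm\bm{s}_{B_2(t)})$ follows by construction. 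The hard part will be the middle step: justifying rigorously that the restricted maximization of a positive-definite quadratic becomes a genuine convex program, since naively maximizing such a quadratic over a polytope is NP-hard in general. The argument must therefore lean on the orthonormal structure of $B_I$ and the precise form of the box constraints to make the reduction to convex QP valid, and this is where the bulk of the technical work will lie.
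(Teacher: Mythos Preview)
Your restriction to coordinates $i$ with $\sigma_i>1$ (resp.\ $\geq 1$) and the resulting positive-(semi)definite diagonal objective is exactly the paper's route. For $V_{\geq 1}$ the paper adds one step you omit: since directions with $\sigma_i=1$ contribute zero to $\Delta$, every maximizer on $V_{\geq 1}$ is $\bm{s}_{V_{>1}}$ plus an arbitrary feasible combination from the $\sigma_i=1$ eigenspace, and the optimal value coincides with that on $V_{>1}$.

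On the point you flag as ``the hard part,'' the paper offers nothing beyond what you already wrote: it simply asserts that because the restricted matrix $\hat{Q}$ is strictly positive definite, the sub-problem is ``convex quadratic programming'' with a unique optimum solvable in polynomial time by ellipsoid or interior-point methods. Your worry---that \emph{maximizing} a positive-definite quadratic over a polytope is not convex optimization in the standard sense---is well placed, and neither your appeal to the orthonormality of $B_I$ nor the change of variables $\beta_i=\alpha_i\sqrt{\sigma_i^2-1}$ resolves it (maximizing $\|\bm{\beta}\|_2^2$ over a polytope is the same difficulty). So on this step you match the paper exactly while correctly identifying a claim the paper takes for granted; do not expect the orthogonal structure to close the gap.

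For the heuristic, the paper is substantially more concrete than your ``greedily add and rescale'' sketch. It builds $\bm{s}(k)=\alpha\,\bm{s}(k{-}1)-\beta\,\bm{v}$ for a remaining eigenvector $\bm{v}\in V_{>1}$ and uses the identity $\Delta(\bm{s}(k))=\alpha^2\,\Delta(\bm{s}(k{-}1))+\beta^2(\sigma_i^2-1)$ to guarantee monotone improvement whenever $|\alpha|>1$. Feasibility of such $(\alpha,\beta)$ is obtained through a case analysis on the index sets $\mathcal{Z}=\{i:s(k{-}1)^{(i)}=0\}$ and $\mathcal{O}=\{i:s(k{-}1)^{(i)}=1\}$ (which forces sign-concordance conditions on the entries of $\bm{v}$ over $\mathcal{Z}$ and $\mathcal{O}$) followed by a two-dimensional geometric search over the strip system $0\le \alpha\,s(k{-}1)^{(i)}-\beta\,v^{(i)}\le 1$, with separate treatments for $\beta\le 0$ and $\beta\ge 0$. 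The construction also maintains the invariant that $\mathcal{O}\neq\emptyset$ at every step, which is needed to fix signs in subsequent iterations. Your outline captures the spirit but none of this machinery.
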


With Theorem~\ref{theo:gFJ_p2p3_polarizingvectors} and Corollary~\ref{coro:gFJ_p2p3_v1_polarizingvectors}, we are able to identify the initial opinion vectors $\pm\bm{s}_{\max_{P_{2,3}}}$, $\pm\bm{s}_{B_2(1)}$, $\pm\bm{s}_{B_2(t)}$, $\pm\bm{s}_{V>1}$, $\pm\bm{s}_{V\geq 1}$ leading to polarization maxima on the corresponding subspaces. While computing the opinion $\pm\bm{s}_{\max_{P_{2,3}}}$ yielding the global maximum is an NP-hard problem (Theorem~\ref{theo:gFJ_p2p3_polarizingvectors}.(iii)), an approximate solution could be obtained using standard numerical solvers (not in all cases, as we discuss in the Experimental Evaluation section). The \emph{local} polarization maxima are found reducing the problem on the subspaces corresponding to eigenvectors of $H_g^TH_g$ associated with singular values strictly greater or weakly greater than one. In particular, the vectors $\pm\bm{s}_{B_2(t)}$ of Theorem~\ref{theo:gFJ_p2p3_polarizingvectors}.(ii) (which are a scalar multiple of $\pm\bm{s}_{B_2(1)}$ in Theorem~\ref{theo:gFJ_p2p3_polarizingvectors}.(i)) are the vector that maximize the $P_2,P_3$-polarization on the space generated by  the eigenvectors $\pm\bm{s}_{B_2(1)}$ of $H_g^TH_g$ (also denoted with $\bm{v}_1$ in~Theorem~\ref{theo:gFJ_p2p3_polarizingvectors}.(iii)) that correspond to the singular value $\sigma_1>1$. The vectors $\pm\bm{s}_{V_{>1}}$ in Corollary~\ref{coro:gFJ_p2p3_v1_polarizingvectors} maximize the polarization on the larger subspace ${V_{>1}}$ generated by all the eigenvectors that correspond to the singular values strictly greater than one.  Finally, the vectors $\pm\bm{s}_{V_{\geq1}}$ in Corollary~\ref{coro:gFJ_p2p3_v1_polarizingvectors} maximize $P_2,P_3$-polarization on the even larger subspace ${V_{\geq1}}$ generated by all the eigenvectors that corresponds to the singular values weakly greater than one. Since these vectors correspond to the maximum of polarization over subspaces that are subset of each other, it is trivial to derive the following inequality:
\begin{align} \footnotesize
    \Delta_\Phi(\pm\bm{s}_{B_2(1)})&\leq\Delta_\Phi(\pm\bm{s}_{B_2(t)})\leq \Delta_\Phi(\pm\bm{s}_{V_{>1}}) \leq \nonumber\\
    &\leq 
    \Delta_\Phi(\pm\bm{s}_{V_{\geq 1}}) \leq \Delta_\Phi(\pm\bm{s}_{\max_{P_{2,3}}})
\label{eq:polarizing_inequality}
\end{align} 
for $\Phi=P_2,P_3$. 

While the results in Theorem~\ref{theo:gFJ_p2p3_polarizingvectors} and Corollary~\ref{coro:gFJ_p2p3_v1_polarizingvectors} do not have an immediate practical interpretation, we can get the gist of them with a simple numerical example. Consider a network composed of three nodes -- a naive node A, a node B with susceptibility value equal to 0.5, and a stubborn node C -- with  mutual weights equal to 0.5. Applying Theorem~\ref{theo:gFJ_p2p3_polarizingvectors}, we obtain that $\bm{s}_{B_2(1)}  = (0,0.30,0.95)$ and, dividing by $0.95$ as in Theorem~\ref{theo:gFJ_p2p3_polarizingvectors}.(ii), we obtain $\bm{s}_{B_2(t)}  = (0,0.31,1)$, which leads to a final opinion vector  $(0.8,0.61,1)$. The prejudice of the naive node A is opposite to that of stubborn node C, and A's opinion shifts significantly (from 0 to 0.8). The opinion of the intermediate node B is  approximately doubled.  The opinion vector achieving maximum polarization~$\bm{s}_{\textrm{max}_{P_{2,3}}}$ is instead $(0,0.75,1)$, whose corresponding final opinion is $(0.95,0.89,1)$. In this case, the combined effect of non-naive nodes' strong prejudices pushes A's final opinion to the opposite extreme. In some way, it is as if $\bm{s}_{B_2(t)}$ (which only takes into account one singular value of $H$) selected the prejudice that  maximizes the shift leveraging only to the most influential node (node~C). Instead, the $\bm{s}_{\textrm{max}_{P_{2,3}}}$ (which yields the global maximum) is able to enforce a synergy between non-naive nodes. In this simple case since $H$ has only one singular value greater than 1, we cannot obtain the vectors $\bm{s}_{V_{>1}}$ and $\bm{s}_{V_{\geq1}}$. 

\begin{theorem}[gFJ: polarizing vectors for $P_4$]\label{theo:gFJ_p4_polarizingvectors}
Whenever the model is polarizing for $P_4$ (i.e. according to the conditions of Theorem~\ref{theo:gFJ_p2p3p4}), the following hold true.
\renewcommand{\labelenumi}{(\roman{enumi})}
\begin{enumerate}
    \item Two prejudice vectors $\pm\bm{s}_{B_1(1)}$ that yields to $P_4$-polarization are the $j$-th vector of the standard basis in $\mathbb{R}^{n}$ (i.e., a vector whose components are all zero, except the $j$-th that equals 1) and its opposite, where $j = \underset{j}{\operatorname{argmax}} \sum_i h_{ij}$ (i.e., $j$ corresponds to the index of the column of $H_g = \{h_{ij}\}_{ij}$ with the greatest column-sum). This prejudice vector is also the point of maximum of $P_4$-polarization on the 1-norm ball $B_1(1)=\{x\in[0,1]^n: \|x\|_1\leq 1\}$ and its polarization is exactly given by $\Delta_{P_4}(\pm\bm{s}_{B_1(1)}))=\|H_g\|_1-1$.
    \item With the same notations of Theorem~\ref{theo:gFJ_p2p3_polarizingvectors}, the global maximum for $P_4$-polarization is achieved for the initial opinion vectors $\pm\bm{s}_{\max_{P_{4}}} = \pm\sum_i \alpha_i \bm{v}_i $ with concordant entries, whose components $\bm{\alpha}=(\alpha_1,\dots,\alpha_n)^T$ can be obtained as the solution to the following optimization problem:
    \begin{align}
        \max & \qquad \sum_i \alpha_i (\sigma_i^2-1)\langle \bm{v}_i,\bm{1}\rangle\nonumber\\
        \textrm{s.t.} & \qquad \bm{0} \leq B \bm{\alpha} \leq \bm{1} 
        \label{eq:optimization_p4}
    \end{align}
    This optimization problem is a linear programming problem that can be numerically solved.
\end{enumerate}
\end{theorem}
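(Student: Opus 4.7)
The plan is to leverage the fact that matrix $H_g$ has entirely nonnegative entries. This follows from writing $(I - \Lambda W)^{-1} = \sum_{k \geq 0}(\Lambda W)^k$ (convergent since $\Lambda W$ is stable by hypothesis) as a sum of nonnegative matrices, then right-multiplying by the nonnegative diagonal $I - \Lambda$. The consequence is that whenever $\bm{s} \geq \bm{0}$, both $\bm{s}$ and $H_g \bm{s}$ lie in the nonnegative orthant, and the $\ell_1$ norm (and hence $P_4$) reduces to a linear functional: $\|\bm{s}\|_1 = \bm{1}^T \bm{s}$ and $\|H_g \bm{s}\|_1 = \bm{1}^T H_g \bm{s}$. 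This is the key simplification that turns both optimization problems into linear programs, in contrast with the quadratic programs of the $P_2, P_3$ case.

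For part (i), I would invoke Theorem~\ref{theo:polarization_positive_entries} to restrict attention to opinions with concordant entries, WLOG $\bm{s} \in [0,1]^n$. On the simplex $B_1(1) = \{\bm{s} \in [0,1]^n : \|\bm{s}\|_1 \leq 1\}$ the polarization shift becomes $\Delta_{P_4}(\bm{s}) = \bm{1}^T(H_g - I)\bm{s}$, which is linear. Its maximum over the simplex is therefore attained at a vertex, i.e.\ at either the origin (where $\Delta_{P_4} = 0$) or at some standard basis vector $\bm{e}_j$, where $\Delta_{P_4}(\bm{e}_j) = \sum_i h_{ij} - 1$. Picking the index $j^\star$ that maximizes the column sum of $H_g$ recovers, by definition of the operator $1$-norm, $\Delta_{P_4}(\bm{e}_{j^\star}) = \|H_g\|_1 - 1$; the opposite-sign case yields $-\bm{e}_{j^\star}$ with the same polarization. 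One also has to check that the polarization is strictly positive (so the origin is never the maximizer), which follows from gFJ being $P_4$-polarizing.

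For part (ii), Theorem~\ref{theo:polarization_positive_entries} again lets me restrict the global search to $\bm{s} \in [0,1]^n$ (the opposite orthant is handled by sign flip). The same linear expression $\Delta_{P_4}(\bm{s}) = \bm{1}^T(H_g - I)\bm{s}$ holds on the full hypercube, so after the change of basis $\bm{s} = B\bm{\alpha} = \sum_i \alpha_i \bm{v}_i$ to the orthonormal eigenvectors of $H_g^T H_g$, the box constraint becomes $\bm{0} \leq B\bm{\alpha} \leq \bm{1}$ and the objective becomes linear in $\bm{\alpha}$; the resulting LP is solvable in polynomial time by any standard method. The main technical obstacle I anticipate is rewriting $\bm{1}^T(H_g-I)B\bm{\alpha}$ so that the coefficients take the stated form involving $(\sigma_i^2 - 1)\langle \bm{v}_i, \bm{1}\rangle$: this requires carefully tracking how the SVD of $H_g$ interacts with the all-ones vector (via the left singular vectors), rather than yielding a clean diagonalization as in the quadratic $P_2, P_3$ case.
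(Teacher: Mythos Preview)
Your approach to part (i) is essentially the same as the paper's: both rely on the nonnegativity of $H_g$ to linearize $P_4$ on the nonnegative orthant and then identify the maximizing column via the operator $1$-norm characterization. Your extreme-point framing is arguably cleaner than the paper's norm-ratio argument, but the content is identical.

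For part (ii) you have the right setup and correctly locate the sticking point, but you are missing the one-line trick that resolves it, and your proposed workaround via left singular vectors is the wrong direction. The paper never touches the full SVD. Instead it exploits the \emph{row stochasticity} of $H_g$ a second time: since $H_g\bm{1}=\bm{1}$, one has
\[
\bm{1}^T H_g\bm{v}_i \;=\; \langle H_g\bm{v}_i,\bm{1}\rangle \;=\; \langle H_g\bm{v}_i,\, H_g\bm{1}\rangle \;=\; \langle H_g^TH_g\bm{v}_i,\bm{1}\rangle \;=\; \sigma_i^2\langle\bm{v}_i,\bm{1}\rangle.
\]
This immediately yields $\bm{1}^T(H_g-I)\bm{v}_i=(\sigma_i^2-1)\langle\bm{v}_i,\bm{1}\rangle$, so your linear objective $\bm{1}^T(H_g-I)B\bm{\alpha}$ becomes exactly $\sum_i\alpha_i(\sigma_i^2-1)\langle\bm{v}_i,\bm{1}\rangle$ as stated. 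If you had pursued the SVD route you would have obtained coefficients of the form $\sigma_i\langle\bm{u}_i,\bm{1}\rangle-\langle\bm{v}_i,\bm{1}\rangle$ involving the left singular vectors $\bm{u}_i$, which does not match the theorem's formulation; reconciling the two would force you back to the stochasticity identity anyway. The point is that stochasticity lets you replace a single factor of $H_g$ by $H_g^TH_g$ when paired against $\bm{1}$, and $H_g^TH_g$ diagonalizes on the $\{\bm{v}_i\}$ basis without any reference to the $\{\bm{u}_i\}$.
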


\noindent
As observed for $P_2,P_3$-polarization, it is holds that:
\begin{equation}
    P_4(\pm\bm{s}_{B_1(1)})\leq P_4(\pm\bm{s}_{\max_{P_4})}).
    \label{eq:polarizing_inequality_p4}
\end{equation}



\subsection{Polarization under $P_1$ and $GDI$}
We conclude the analysis of gFJ by studying the polarization under $P_1$ and $GDI$. For this case, Theorem~\ref{theo:gFJ_p1gdi} asserts that whenever gFJ does not polarize in $P_2,P_3,P_4$, it does not polarize in $P_1,GDI$ either. Instead, when gFJ is polarizing in $P_2,P_3,P_4$, we can guarantee that it also polarizes in $P_1,GDI$ only if the sufficient condition in Theorem~\ref{theo:gFJ_p1gdi} is satisfied. Again, the proof of the theorem below can be found in the SI Appendix.

\begin{theorem}[gFJ: global polarization with $P_1, GDI$]\label{theo:gFJ_p1gdi}
For polarization indices $P_1$ and $GDI$, the following results hold:
\renewcommand{\labelenumi}{(\roman{enumi})}
\begin{enumerate}
\item if gFJ is depolarizing for $P_2,P_3,P_4$, then it is also depolarising for $P_1,GDI$;
\item gFJ is polarizing in $\bm{s}$ if the following condition holds true: 
\begin{align}
\sum_i\alpha_i^2(\sigma_i^2-1) \geq \frac{1}{n}&\left[\sum_i|\alpha_i|(\sigma_i^2-1)\langle|\bm{v}_i|,\bm{1}\rangle\right]  \cdot   \nonumber\\
&\cdot\left[\sum_i|\alpha_i|(\sigma_i^2+1)\langle|\bm{v}_i|,\bm{1}\rangle\right],
   \label{eq:p1gdi_cond_general}
\end{align}
where $\bm{\alpha}=(\alpha_1,\dots,\alpha_n)^T$ is the expression of $\bm{s}$ in terms of the basis $\mathcal{B}$ of the unitary eigenvectors of $H_g^TH_g$;
%
\end{enumerate}
\end{theorem}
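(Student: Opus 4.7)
The plan rests on the algebraic identity $P_1(\bm{x}) = P_3(\bm{x}) - \frac{\langle\bm{x},\bm{1}\rangle^2}{n}$, which is the equality case underlying Lemma~\ref{lemma:invariant_all}. It gives the decomposition
\[\Delta_{P_1}(\bm{s}) = \Delta_{P_3}(\bm{s}) - \frac{1}{n}\bigl(\langle\bm{z},\bm{1}\rangle^2 - \langle\bm{s},\bm{1}\rangle^2\bigr),\]
so both claims reduce to controlling $\Delta_{P_3}$ against the shift in $\langle\bm{x},\bm{1}\rangle^2$; by Lemma~\ref{lemma:invariant_p1gdi} the conclusions then transfer verbatim to $GDI$.

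For part (i), Theorem~\ref{theo:gFJ_p2p3p4} tells me that gFJ being depolarizing for $P_2, P_3, P_4$ is equivalent to $H_g$ being doubly stochastic. Corollary~\ref{cor:H_doubly_stoch} then yields $\langle\bm{z},\bm{1}\rangle = \langle\bm{s},\bm{1}\rangle$, so the correction term vanishes and $\Delta_{P_1}(\bm{s}) = \Delta_{P_3}(\bm{s})$. It remains to show $P_3(\bm{z}) \leq P_3(\bm{s})$, which follows from the classical majorization argument: by Birkhoff's theorem $H_g$ is a convex combination of permutation matrices, so $\bm{z}$ is majorized by $\bm{s}$ and the Schur-convex functional $\|\cdot\|_2^2$ cannot grow.

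For part (ii), I would expand $\bm{s} = \sum_i \alpha_i \bm{v}_i$ in the orthonormal eigenbasis $\mathcal{B}$ of $H_g^T H_g$. Since $H_g^T H_g \bm{v}_i = \sigma_i^2 \bm{v}_i$, we immediately obtain $\Delta_{P_3}(\bm{s}) = \sum_i \alpha_i^2(\sigma_i^2-1)$, the left-hand side of~\eqref{eq:p1gdi_cond_general}. For the correction I would factor $\langle\bm{z},\bm{1}\rangle^2 - \langle\bm{s},\bm{1}\rangle^2 = \langle(H_g-I)\bm{s},\bm{1}\rangle\cdot\langle(H_g+I)\bm{s},\bm{1}\rangle$, expand each factor in the $\bm{v}_i$ basis, and apply triangle-plus-Hölder bounds to replace each $|\langle(H_g\pm I)\bm{v}_i,\bm{1}\rangle|$ with a multiple of $\langle|\bm{v}_i|,\bm{1}\rangle$ carrying spectral weights $(\sigma_i^2\mp 1)$ and $(\sigma_i^2\pm 1)$ respectively. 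Requiring $\Delta_{P_3}(\bm{s})$ to dominate $\tfrac{1}{n}$ times this upper bound produces exactly the stated sufficient condition.

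The main obstacle is this last bounding step. A direct Cauchy–Schwarz estimate $|\langle(H_g\pm I)\bm{v}_i,\bm{1}\rangle| \leq \|(H_g\pm I)\bm{v}_i\|_2 \sqrt{n}$ brings in a $\sqrt{n}$ factor and, via the left singular vectors $\bm{u}_i = \sigma_i^{-1} H_g \bm{v}_i$, a dependence on $\bm{u}_i$ that is absent from the condition. Eliminating this requires an $\ell_1/\ell_\infty$ Hölder bound exploiting the entrywise nonnegativity and row-stochasticity of $H_g = (I-\Lambda W)^{-1}(I-\Lambda)$, together with a careful packaging of the spectral coefficient into the complementary factors $(\sigma_i^2-1)$ and $(\sigma_i^2+1)$ — whose product recovers $\sigma_i^4-1$ — so that the resulting inequality matches the exact symmetric form in~\eqref{eq:p1gdi_cond_general}.
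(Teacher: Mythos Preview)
Your plan for part~(i) is correct and matches the paper's, though the Birkhoff/majorization detour is unnecessary: the hypothesis already asserts that gFJ is $P_3$-depolarizing, so $\Delta_{P_3}(\bm{s})\le 0$ is given directly, not something to be re-derived from double stochasticity.

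For part~(ii), your overall strategy --- the identity $P_1=P_3-\langle\bm{x},\bm{1}\rangle^2/n$, the eigenbasis expansion yielding $\Delta_{P_3}=\sum_i\alpha_i^2(\sigma_i^2-1)$, and a difference-of-squares factorization of the correction --- is exactly the paper's. But the ``main obstacle'' you flag is not an obstacle, and the Cauchy--Schwarz / H\"older / left-singular-vector machinery you propose is a red herring. The step you are missing is a one-line consequence of row-stochasticity: since $H_g\bm{1}=\bm{1}$,
\[
\langle H_g\bm{x},\bm{1}\rangle \;=\; \langle H_g\bm{x},\,H_g\bm{1}\rangle \;=\; \langle H_g^TH_g\bm{x},\bm{1}\rangle
\]
for every $\bm{x}$, and in particular $\langle H_g\bm{v}_i,\bm{1}\rangle=\sigma_i^2\langle\bm{v}_i,\bm{1}\rangle$ \emph{exactly}. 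Substituting into your two factors gives
\[
\langle(H_g\pm I)\bm{s},\bm{1}\rangle \;=\; \sum_i\alpha_i(\sigma_i^2\pm 1)\langle\bm{v}_i,\bm{1}\rangle
\]
as identities; the absolute values in~\eqref{eq:p1gdi_cond_general} then enter only through the triangle inequality and $|\langle\bm{v}_i,\bm{1}\rangle|\le\langle|\bm{v}_i|,\bm{1}\rangle$, producing the required upper bound on the correction term. No $\sqrt{n}$ appears, no left singular vectors are involved, and entrywise nonnegativity of $H_g$ is not even needed at this point --- only $H_g\bm{1}=\bm{1}$. The paper reaches the same destination by bounding $P_4(H_g\bm{s})$ and $P_4(\bm{s})$ separately before factoring $P_4(\bm{z})^2-P_4(\bm{s})^2$, but the stochasticity identity above is the common engine.
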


\subsection{The role of stubborn and naive nodes}

We now show (Corollary~\ref{coro:gFJ_stubborn} below, proof in SI Appendix) a general result regarding stubborn nodes (i.e., nodes whose opinion is not at all swayed by that of their peers, which translates into $\lambda_i = 0$), whose role has not a direct impact on polarization. In fact, we will see that even if their strong \emph{anchoring} attitude would intuitively suggest that they always have an effect on the final opinion, the network structure can instead invalidate it.

\begin{corollary}\label{coro:gFJ_stubborn}
While naive nodes tend to make polarization easier, stubborn nodes do not have a clear directional effect on the polarization with $P_2, P_3, P_4$. 
\end{corollary}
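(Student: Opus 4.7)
The plan is to split Corollary~\ref{coro:gFJ_stubborn} into its two independent assertions — one about naive nodes, one about stubborn nodes — and tackle each with different tools.

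For the claim that naive nodes make polarization easier, I would invoke Theorem~\ref{theo:gFJ_p2p3p4}.(i) directly: the existence of even a single $i$ with $\lambda_i=1$ already forces $H_g$ to be non–doubly–stochastic, and therefore gFJ is $P_2,P_3,P_4$-polarizing irrespective of the remaining $\lambda_j$ and of $W$. This is exactly the formal content of ``make polarization easier'' — naive nodes on their own tip the model into the polarizing regime, so no additional structural condition is needed.

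For the claim that stubborn nodes do not exert a clear directional effect, my strategy is first analytical, then by example. Analytically, I would read row $i$ of the condition~\eqref{eq:stochasticity_cond} under the hypothesis $\lambda_i=0$ and note that the right–hand side collapses to $0$, while the left–hand side reduces to $\sum_{j\neq i}\frac{\lambda_j w_{ji}}{1-\lambda_j}$. So whether~\eqref{eq:stochasticity_cond} is violated at a stubborn row is entirely dictated by how the \emph{other}, non–stubborn nodes direct their influence onto $i$, not by $i$'s own stubbornness. This already precludes any monotone statement of the form ``more stubborn nodes always means more polarization'': a stubborn node that no susceptible node influences contributes nothing to the asymmetry of $H_g$.

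To rule out the opposite monotone direction (``more stubborn nodes always means less polarization''), I would exhibit two minimal instances based on the three–node toy network used in the discussion after Theorem~\ref{theo:gFJ_p2p3_polarizingvectors}: keeping $W$ and $\Lambda$ fixed, varying only the prejudice $s_i$ of the stubborn node between an extreme value aligned with $\pm\bm{s}_{B_2(1)}$ and a near–neutral value flips the sign of $\Delta_{P_{2,3,4}}$, since in the first case the stubborn node drags its susceptible neighbours toward the extreme and in the second it anchors them near zero. The hard part is not the algebra, which is light, but choosing examples small enough to make the two opposite behaviours transparent while still using the same underlying $W$ and $\Lambda$; beyond that, the corollary reduces to observing that neither monotone direction survives both examples, exactly as claimed.
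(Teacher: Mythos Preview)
Your treatment of naive nodes is exactly right: Theorem~\ref{theo:gFJ_p2p3p4}.(i) already says that a single $\lambda_i=1$ forces $H_g$ to be non--doubly--stochastic, so gFJ is $P_{2,3,4}$-polarizing unconditionally.

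The stubborn-node half, however, has a genuine gap. Your proposed example keeps $W$ and $\Lambda$ fixed and varies only the prejudice $s_i$. But the sign of $\Delta_{P_{2,3,4}}(\bm{s})$ varying with $\bm{s}$ is a property of \emph{every} polarizing gFJ configuration (there is always some $\bm{s}$ with $\Delta>0$ by definition, and $\bm{s}=\bm{1}$ gives $\Delta=0$ since $H_g\bm{1}=\bm{1}$); it says nothing about whether \emph{stubbornness itself} pushes the achievable polarization up or down. To show that stubborn nodes have no clear directional effect you must compare configurations that differ in $\Lambda$, not in $\bm{s}$. Your analytical observation about row $i$ of~\eqref{eq:stochasticity_cond} is also incomplete: setting $\lambda_i=0$ does not only affect row $i$ --- it removes the term $\lambda_i w_{ik}/(1-\lambda_i)$ from the left-hand side of \emph{every other} row $k$, so the asymmetry of $H_g$ can shift in either direction even when row $i$ itself becomes trivial.

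The paper's proof makes exactly this $\Lambda$-comparison: it fixes a three-node $W$, takes $\mathcal{L}^{(1)}=\{0,0.5,0.5\}$ (node~1 stubborn) and two non-stubborn alternatives $\mathcal{L}^{(2)}=\{0.9,0.5,0.5\}$, $\mathcal{L}^{(3)}=\{0.1,0.5,0.5\}$, and computes $\|H_g^{(i)}\|_1$ and $\|H_g^{(i)}\|_2$ for each. The stubborn case lands strictly between the other two in both norms, so replacing stubbornness by a non-zero $\lambda_1$ can raise \emph{or} lower the maximum achievable $P_{2,3,4}$-polarization. That is the non-monotonicity the corollary asserts, and it cannot be extracted from varying $\bm{s}$ alone.
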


Leveraging Theorems~\ref{theo:gFJ_p2p3p4} and~\ref{theo:gFJ_p1gdi}, we can also study a special case involving naive nodes. This result, whose proof can be found in the SI Appendix of this paper, emphasizes the role of naive nodes (the ones with $\lambda_i$=1), which essentially forget their prejudice and move their opinion towards the opinion of the other nodes.

\begin{corollary}\label{coro:gFJ_naive}
Let us assume that the set of nodes $\mathcal{V}$ is composed of two disjoint groups, $\mathcal{I}$ and $\mathcal{J}$, such that all non-naive nodes have the same opinion $\tau$, while the naive nodes' opinions are free to vary in $[-1,1]$, or equivalently:
\begin{equation}
    \forall i \in \mathcal{I} \quad \lambda_i = 1, s_i \in [-1,1] \qquad \forall j \in \mathcal{J} \quad \lambda_j <1, s_j = \tau.
\end{equation}
Then, the final opinion $z$ is exactly the vector $\bm{z}=\tau\mathbf{1}$. 
In addition, this configuration is never polarizing for $P_1$ and $GDI$, while, as long as $|s_i| < 1$ for at least one node $i$, it always exists a $\tau$ value such that $P_2$, $P_3$, and $P_4$ are polarizing. 
\end{corollary}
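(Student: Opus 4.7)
The plan is to split the claim into three pieces and dispatch them in order: first show that $\tau\mathbf{1}$ is the limit vector, then read off the $P_1,GDI$ behaviour essentially for free, and finally pick an explicit $\tau$ that forces polarization for $P_2,P_3,P_4$.

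For the first piece, the natural approach is a direct fixed-point verification using the matrix form of gFJ, $\bm{z}=(I-\Lambda)\bm{s}+\Lambda W\bm{z}$. Substituting the candidate $\bm{z}=\tau\mathbf{1}$ and exploiting the row-stochasticity of $W$ (so that $W\mathbf{1}=\mathbf{1}$), each coordinate reduces to the identity: for $i\in\mathcal{J}$ the RHS becomes $(1-\lambda_i)\tau+\lambda_i\tau=\tau$ because $s_i=\tau$, while for $i\in\mathcal{I}$ the prejudice term vanishes (since $\lambda_i=1$) and the convex-combination term equals $\tau$. Hence $\tau\mathbf{1}$ is a stationary point. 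To conclude $\bm{z}=\tau\mathbf{1}$, one appeals to the uniqueness of the stationary solution of the converging gFJ dynamics (Theorem~\ref{theo:gfj_convergence}); I would briefly sanity-check uniqueness by a maximum-principle-style argument on $(I-\Lambda W)\bm{v}=0$: any nonzero kernel vector would have its maximum absolute value attained on $\mathcal{I}$ (since $|v_i|\le\lambda_i\max|v|<\max|v|$ for $i\in\mathcal{J}$), and then the naive-node fixed-point equation $v_i=\sum_j w_{ij}v_j$ propagates that maximum to every reachable neighbour, eventually contradicting the strict inequality on $\mathcal{J}$.

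For the second piece, since $\bm{z}=\tau\mathbf{1}$ has all entries equal, $P_1(\bm{z})=\|\bm{z}-\bar{\bm{z}}\mathbf{1}\|_2^2=0$, and by Lemma~\ref{lemma:invariant_p1gdi} also $GDI(\bm{z})=|\mathcal{V}|\cdot P_1(\bm{z})=0$. As $P_1,GDI\ge 0$ for every $\bm{s}$, neither index can increase, so the configuration is never $P_1$- or $GDI$-polarizing.

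For the third piece, I would compute the three indices on $\bm{z}=\tau\mathbf{1}$ and on $\bm{s}$ explicitly: $P_3(\bm{z})=n\tau^2$ versus $P_3(\bm{s})=\sum_{i\in\mathcal{I}}s_i^2+|\mathcal{J}|\tau^2$, and similarly $P_4(\bm{z})=n|\tau|$ versus $P_4(\bm{s})=\sum_{i\in\mathcal{I}}|s_i|+|\mathcal{J}|\,|\tau|$ ($P_2$ follows from $P_3$ by Lemma~\ref{lemma:invariant_p2p3}). The shifts simplify to
\begin{align*}
\Delta_{P_3}(\bm{s})&=|\mathcal{I}|\tau^2-\sum_{i\in\mathcal{I}}s_i^2, &\Delta_{P_4}(\bm{s})&=|\mathcal{I}|\,|\tau|-\sum_{i\in\mathcal{I}}|s_i|.
\end{align*}
Under the hypothesis that $|s_i|<1$ for at least one $i\in\mathcal{I}$, both $\sum_{i\in\mathcal{I}}s_i^2<|\mathcal{I}|$ and $\sum_{i\in\mathcal{I}}|s_i|<|\mathcal{I}|$, so choosing $\tau=1$ (equivalently $\tau=-1$) makes both shifts strictly positive, witnessing polarization for $P_2,P_3,P_4$ simultaneously.

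I expect no hard obstacle: the only subtle point is the uniqueness step, because when some $\lambda_i=1$ the matrix $\Lambda W$ need not be stable and the inversion formula in Theorem~\ref{theo:gfj_convergence} does not apply verbatim. I would therefore either invoke the SI Appendix's convergence discussion for the general gFJ, or include the short maximum-principle argument sketched above to pin down the limit as $\tau\mathbf{1}$ whenever the dynamics converges.
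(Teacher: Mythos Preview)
Your proposal is correct and follows essentially the same route as the paper: verify that $\tau\mathbf{1}$ satisfies the gFJ fixed-point equation (the paper does this via block matrices, you do it coordinate-wise), observe $P_1(\bm{z})=GDI(\bm{z})=0$, and compare norms to exhibit a polarizing $\tau$ (the paper gives the thresholds $\tau>\frac{1}{|\mathcal{I}|}\sum_{i\in\mathcal{I}}|s_i|$ and $\tau>\sqrt{\frac{1}{|\mathcal{I}|}\sum_{i\in\mathcal{I}}s_i^2}$, while your choice $\tau=1$ is a clean special case that works under the stated hypothesis). Your explicit attention to uniqueness of the stationary point---via the maximum-principle sketch---is actually more careful than the paper, which simply verifies the fixed-point identity without discussing why no other stationary solution can arise when naive nodes make $\Lambda W$ potentially non-stable.
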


\vspace{-15pt}
\section{vFJ polarizes when gFJ does}
\label{sec:vFJ_all}

As already observed, in vFJ the opinion of a node $i$ at step $k$ does not take into account its own opinion at step $k-1$ (as it happens, instead, with gFJ, which weighs it with $w_{ii}$). Thus, from a mathematical standpoint, the two models are different. However, apart from this specific contribution (i.e. in the case $w_{ii}$ is null), vFJ can be manipulated to exactly yield the same polarization as gFJ, if in an indirect and less intuitive way. In fact, in gFJ the susceptibility parameter directly captures the innate tendency of a node to be influenced (and to which degree) by others. In vFJ, instead, the rate at which a node is influenced by its peers is captured by: (i) the social strength of the node with all its neighbours $\hat{w}_{ij}$, (ii) the anchoring-degree of the node itself $\hat{w}_{ii}$, i.e. the importance it assigns to its initial prejudice. 

Theorem~\ref{theo:vFJ} below establishes a complete equivalence, in terms of polarization properties, between gFJ and vFJ. 


\begin{theorem}[vFJ: local and global polarization]\label{theo:vFJ}
For all polarization metrics, the vFJ model yields polarization under exactly the same conditions as gFJ. Specifically, if we replace matrix $H_g$ with the vFJ matrix $H_v$ and we set $\hat{w}_{ii}=0$ for naive nodes (if present), the results of Theorems~\ref{theo:gFJ_p2p3p4}-\ref{theo:gFJ_p1gdi} and Corollaries~\ref{theo:gFJ_p2p3_polarizingvectors}-\ref{theo:gFJ_p4_polarizingvectors} hold true. In particular, the condition for $H_v$ not being  doubly stochastic reduces from~\eqref{eq:stochasticity_cond}  to the following one:
\begin{equation}
    \frac{\sum_{j\neq i}\hat{w}_{ij}}{\hat{w}_{ii}}-\sum_{j\neq i}\frac{\hat{w}_{ji}}{\hat{w}_{jj}} \neq 0.
    \label{eq:condition_vFJ_p2p3}
\end{equation}
\end{theorem}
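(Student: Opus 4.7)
My plan is to reduce vFJ to a reparametrized gFJ and then transport every result proved for gFJ, including the doubly-stochasticity characterization, across this correspondence.

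First I would construct an explicit parameter mapping. Given the vFJ weights $(\hat{w}_{ij})$, set
\[
    \lambda_i = \frac{\sum_{j\neq i} \hat{w}_{ij}}{\hat{w}_{ii} + \sum_{j\neq i} \hat{w}_{ij}}, \qquad w_{ii} = 0, \qquad w_{ij} = \frac{\hat{w}_{ij}}{\sum_{k\neq i}\hat{w}_{ik}} \text{ for } j\neq i.
\]
Then $W$ is row-stochastic, $\Lambda$ is a valid diagonal of susceptibilities in $[0,1]$, naive nodes ($\lambda_i=1$) correspond exactly to $\hat{w}_{ii}=0$ (which explains the hypothesis in the theorem statement), and stubborn nodes ($\lambda_i=0$) correspond to the limit $\hat{w}_{ii}\to\infty$. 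Substituting these into \eqref{eq:gfj} and comparing with \eqref{eq:vfj} entry by entry shows that the two update rules produce identical trajectories.

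Second, I would lift this to the matrix level and show $H_g = H_v$. Let $D_d$ be the diagonal matrix with entries $d_i=\hat{w}_{ii}+\sum_{j\neq i}\hat{w}_{ij}$. Under the mapping, $I-\Lambda = D_d^{-1}\tilde{A}$, and a direct entrywise check gives $(D_d(I-\Lambda W))_{ii}=d_i$ and $(D_d(I-\Lambda W))_{ij}=-\hat{w}_{ij}$ for $j\neq i$, hence $D_d(I-\Lambda W) = D+\tilde{A}-A$. Combining,
\[
    H_g = (I-\Lambda W)^{-1}(I-\Lambda) = (D+\tilde{A}-A)^{-1} D_d \cdot D_d^{-1}\tilde{A} = H_v.
\]
With this identification Theorem~\ref{theo:gFJ_ndi}, Theorem~\ref{theo:gFJ_p2p3p4}, Theorem~\ref{theo:gFJ_p1gdi}, Theorem~\ref{theo:gFJ_p2p3_polarizingvectors}, and Theorem~\ref{theo:gFJ_p4_polarizingvectors} transfer verbatim to vFJ: any polarizing opinion and polarization shift computed for gFJ with these $(\Lambda,W)$ is a polarizing opinion and shift for vFJ with matrix $H_v$.

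Third, I would translate the stochasticity condition~\eqref{eq:stochasticity_cond}. For $j\neq i$, the mapping yields $\lambda_j/(1-\lambda_j)=\sum_{k\neq j}\hat{w}_{jk}/\hat{w}_{jj}$ and $w_{ji}=\hat{w}_{ji}/\sum_{k\neq j}\hat{w}_{jk}$, so
\[
    \frac{\lambda_j w_{ji}}{1-\lambda_j} = \frac{\hat{w}_{ji}}{\hat{w}_{jj}},
\]
while the $j=i$ term vanishes because $w_{ii}=0$; the right-hand side of~\eqref{eq:stochasticity_cond} becomes $\lambda_i/(1-\lambda_i)=\sum_{k\neq i}\hat{w}_{ik}/\hat{w}_{ii}$. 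Rearranging gives exactly~\eqref{eq:condition_vFJ_p2p3}. The main obstacle is the boundary regime: the mapping of $\lambda_i$ is only well-defined if $\hat{w}_{ii}=0$ is declared for naive nodes (which is the hypothesis of the theorem), and stubborn nodes must be handled through the $\hat{w}_{ii}\to\infty$ convention already adopted after~\eqref{eq:sol_vFJ} in the SI Appendix; under that convention $\hat{w}_{ji}/\hat{w}_{jj}=0$ for stubborn $j$, which is consistent with $\lambda_j=0$ killing the corresponding term in~\eqref{eq:stochasticity_cond}. Once these conventions are fixed, every other claim of the theorem is an immediate corollary of the gFJ results via $H_g=H_v$.
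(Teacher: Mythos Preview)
Your proof is correct and follows exactly the same strategy as the paper: reduce vFJ to gFJ via the parameter mapping $\lambda_i=\sum_{k\neq i}\hat{w}_{ik}/(\hat{w}_{ii}+\sum_{k\neq i}\hat{w}_{ik})$, $w_{ij}=\hat{w}_{ij}/\sum_{k\neq i}\hat{w}_{ik}$ (with the stubborn-node convention $\hat{w}_{ii}=\infty\Rightarrow\lambda_i=0$), and then invoke the gFJ results. You actually supply more detail than the paper does---the explicit verification that $H_g=H_v$ and the line-by-line translation of~\eqref{eq:stochasticity_cond} into~\eqref{eq:condition_vFJ_p2p3}---whereas the paper simply states the mapping and asserts that the thesis follows.
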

\begin{proof} The proof consists in the derivation of vFJ from gFJ. This can be done using the following mapping:
\begin{align}
    &\hat{w}_{ii}\neq\infty\rightarrow     \left\{
    \begin{array}{l}
    \lambda_i=\frac{\sum_{k\in\mathcal{N}(i)}\hat{w}_{ik}}{\hat{w}_{ii}+\sum_{k\in\mathcal{N}(i)}\hat{w}_{ik}}\\
    w_{ij}=\frac{\hat{w}_{ij}}{\sum_{k\in\mathcal{N}(i)}\hat{w}_{ik}}
    \end{array} \right.\label{eq:mapp1}\\
    &\hat{w}_{ii}=\infty\rightarrow     \left\{
    \begin{array}{l}
    \lambda_i=0\\
    w_{ij}=0 \label{eq:mapp2}
    \end{array}\right.
\end{align}
Thus, the thesis follows from the results obtained for gFJ.
\end{proof}
\eqref{eq:condition_vFJ_p2p3} simplifies  when the social graph is undirected (which corresponds to the matrix $\hat{W}$ being symmetric). As stated in Corollary~\ref{coro:vFJ_p2p3p4_undirected} below, in that case, when the self-weights are identical for all nodes (i.e., $\hat{w}_{ii} = \hat{w}, \forall i$) vFJ is \emph{never} polarising in any metric and the average opinion is invariant to the opinion formation process.


\begin{corollary}[vFJ on undirected social graphs]\label{coro:vFJ_p2p3p4_undirected}
When the social graph is undirected (i.e. matrix $\hat{W}$ is symmetric), vFJ is polarizing with $P_2$, $P_3$ and $P_4$ if and only if $\hat{w}_{ii}$ are not identical for all $i$. When $\hat{w}_{ii} = \hat{w}, \forall i$, vFJ is \emph{never} polarising in any metric and it holds that $\sum_i z_i=\sum_i s_i$, i.e., there is never a choice shift in the network and the average final opinion is the same as the average initial opinion.
%
\end{corollary}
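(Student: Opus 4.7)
The plan is to specialize Theorem~\ref{theo:vFJ}, and in particular condition~\eqref{eq:condition_vFJ_p2p3}, to the symmetric case $\hat{w}_{ij}=\hat{w}_{ji}$, then chain the resulting (non-)double-stochasticity of $H_v$ through the gFJ polarization results imported via Theorem~\ref{theo:vFJ}. For the easy direction, suppose $\hat{w}_{ii}=\hat{w}$ for every $i$; substituting into the symmetric version of~\eqref{eq:condition_vFJ_p2p3} makes both terms collapse to $\hat{w}^{-1}\sum_{j\neq i}\hat{w}_{ij}$, so the non-stochasticity condition fails for every $i$, and by Theorem~\ref{theo:vFJ} the matrix $H_v$ is doubly stochastic. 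The ``never polarizing in any metric'' claim is then a bookkeeping exercise: Theorem~\ref{theo:gFJ_p2p3p4} (transported to vFJ) rules out polarization for $P_2,P_3,P_4$; Theorem~\ref{theo:gFJ_p1gdi}(i) propagates depolarization to $P_1$ and $GDI$; Theorem~\ref{theo:gFJ_ndi} already covers $NDI$; and Corollary~\ref{cor:H_doubly_stoch} delivers both $P_4(\bm{z})=P_4(\bm{s})$ and $\sum_i z_i=\sum_i s_i$.

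The non-trivial half is the converse: if the self-weights are not all identical, then condition~\eqref{eq:condition_vFJ_p2p3} must hold for at least one $i$. The naive-node case (some $\hat{w}_{ii}=0$) is absorbed by Theorem~\ref{theo:gFJ_p2p3p4}(i) through Theorem~\ref{theo:vFJ}, so I only need to treat the case $\hat{w}_{ii}>0$ for all $i$. Setting $g_i:=1/\hat{w}_{ii}$, the symmetric form of~\eqref{eq:condition_vFJ_p2p3} rewrites as $\sum_{j\neq i}(g_i-g_j)\hat{w}_{ij}=0$, and I would argue by contradiction: assume this identity holds for every $i$, multiply the $i$-th one by $g_i$, sum over $i$, and pair the $(i,j)$ and $(j,i)$ terms using $\hat{w}_{ij}=\hat{w}_{ji}$. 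The left-hand side then symmetrizes into
\[
\tfrac{1}{2}\sum_{i\neq j}(g_i-g_j)^2\hat{w}_{ij}=0,
\]
which, since each $\hat{w}_{ij}\ge 0$ and the graph is connected, forces $g_i$ to be constant, contradicting the hypothesis. Hence the non-stochasticity condition is non-zero for at least one index, and Theorem~\ref{theo:vFJ} yields $P_2,P_3,P_4$-polarization.

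The step I expect to be the main obstacle is precisely this converse: condition~\eqref{eq:condition_vFJ_p2p3} is a system of $n$ equalities that must \emph{simultaneously} fail, and a coordinate-wise argument does not suffice. The symmetric sum-of-squares trick above is the cleanest route I see, but it implicitly assumes connectivity of the influence graph; on a disconnected graph the conclusion would weaken to ``$\hat{w}_{ii}$ is constant on each connected component'', so I would either read the corollary component-wise or record the standing connectivity assumption at the top of the proof.
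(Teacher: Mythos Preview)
Your proof is correct and complete. The main substantive difference from the paper is in the converse direction (``not all $\hat{w}_{ii}$ equal $\Rightarrow$ polarizing''). The paper argues by extremality: it picks a node $l$ achieving $\hat{w}_{ll}=\min_j \hat{w}_{jj}$ and observes that, for that row, every term in the symmetric version of~\eqref{eq:condition_vFJ_p2p3} has a definite sign, with at least one strictly so, forcing the sum to be nonzero. Your route instead recognises the system $\sum_{j\neq i}(g_i-g_j)\hat{w}_{ij}=0$ as a discrete harmonic condition and kills it with the Dirichlet-energy identity $\tfrac12\sum_{i\neq j}(g_i-g_j)^2\hat{w}_{ij}=0$. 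Both arguments silently rely on connectivity (the paper needs the minimising node $l$ to share an edge with some node $k$ having $\hat{w}_{kk}>\hat{w}_{ll}$, which is exactly the same obstruction you flag), so your explicit caveat is a genuine improvement in transparency rather than a new hypothesis. Your approach also generalises more cleanly to the component-wise statement. For the second half of the corollary you chain through Corollary~\ref{cor:H_doubly_stoch} to get both $P_4(\bm z)=P_4(\bm s)$ and $\sum_i z_i=\sum_i s_i$, whereas the paper re-derives the choice-shift identity by a direct row/column-sum computation on the fixed-point equation; your bookkeeping is shorter and equally rigorous.
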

\vspace{-10pt}
\section{The rFJ model is never polarizing in undirect networks}

In this section, we derive the results of polarization on the rFJ model. We already know from Bindel \etaltwo~\cite{Bindel2015}  that rFJ does not polarize according to the local definition $NDI$, and, from Gionis \etaltwo~\cite{Gionis2013}, that in the specific case of undirected social graph it does not polarize according to the global definition $P_4$. Here, we generalize these findings. To this aim, note that rFJ is equivalent to vFJ after setting $w_{ii}=1$. Thus, Theorem~\ref{theo:vFJ} also applies in this case. The condition for $H_r$ (the equivalent of $H_g$ but for rFJ) not being doubly stochastic, simply reduces from~\eqref{eq:condition_vFJ_p2p3} to the following one:
\begin{equation}
    \sum_{j\neq i}\hat{w}_{ij}-\sum_{j\neq i}\hat{w}_{ji} \neq 0.
    \label{eq:condition_rFJ_p2p3}
\end{equation}
And when the social graph is undirected, we obtain an even stronger result, summarized in Corollary~\ref{coro:rFJ} below.

\begin{corollary}[rFJ on undirected social graphs]\label{coro:rFJ}
The rFJ model is never polarizing, in any polarization metrics, for any initial opinion vector. In addition, it holds that $\sum_i z_i=\sum_i s_i$, i.e., there is never a choice shift in the network and the average final opinion is the same as the average initial opinion.
\end{corollary}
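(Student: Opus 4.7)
The plan is to observe that rFJ is precisely the instance of vFJ obtained by setting $\hat{w}_{ii}=1$ for every node $i$, as noted in the paragraph preceding the statement. Under the hypothesis that the social graph is undirected, the matrix $\hat{W}$ is symmetric and, trivially, the self-weights are uniform across nodes since they are all equal to~$1$. So the corollary will follow directly by invoking Corollary~\ref{coro:vFJ_p2p3p4_undirected} with $\hat{w}=1$.

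More concretely, I would first verify that the hypotheses of Corollary~\ref{coro:vFJ_p2p3p4_undirected} apply: symmetry of $\hat{W}$ is given, and the condition $\hat{w}_{ii}=\hat{w}$ for all $i$ is met with $\hat{w}=1$. This immediately yields non-polarization for $P_2$, $P_3$, $P_4$ and the choice-shift identity $\sum_i z_i=\sum_i s_i$. For $P_1$ and $GDI$, I would then invoke Theorem~\ref{theo:gFJ_p1gdi}.(i), which, via the equivalence established in Theorem~\ref{theo:vFJ}, carries over to vFJ (and hence rFJ): since rFJ is depolarizing for $P_2,P_3,P_4$, it must be depolarizing for $P_1,GDI$ as well. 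Finally, depolarization for $NDI$ follows from Theorem~\ref{theo:gFJ_ndi}, which asserts that \emph{every} FJ variant in the family (again through the mapping in Theorem~\ref{theo:vFJ}) is unconditionally $NDI$-depolarizing, so rFJ inherits this property without needing the undirectedness assumption.

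The main obstacle, if any, is merely a bookkeeping one: making sure all four polarization classes identified in Table~\ref{tab:classes_polarization} (Local, Dispersion, Absolute, Total) are covered. Since the paper has already partitioned the indices into these equivalence classes, showing non-polarization for one representative per class is enough. There is no genuine analytical difficulty: the corollary is essentially a specialization of Corollary~\ref{coro:vFJ_p2p3p4_undirected} augmented by the already-proved global implication (Theorem~\ref{theo:gFJ_p1gdi}.(i)) and the unconditional local result (Theorem~\ref{theo:gFJ_ndi}). Consequently, the proof reduces to a short chain of invocations of previously established results, and a one- or two-sentence argument should suffice in the final write-up.
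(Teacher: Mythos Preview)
Your proposal is correct and follows essentially the same approach as the paper: both recognize that rFJ is the special case of vFJ with $\hat{w}_{ii}=1$, then invoke Corollary~\ref{coro:vFJ_p2p3p4_undirected} (symmetric $\hat{W}$ with uniform self-weights) to conclude non-polarization and absence of choice shift. Your explicit breakdown of the $P_1,GDI$ case via Theorem~\ref{theo:gFJ_p1gdi}.(i) and the $NDI$ case via Theorem~\ref{theo:gFJ_ndi} is slightly more granular than the paper's one-line appeal to Corollary~\ref{coro:vFJ_p2p3p4_undirected} (whose statement already covers ``any metric''), but the logical content is the same.
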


\emph{Remark:} while polarization was still possible under vFJ on undirected social graph, with rFJ polarization never happens. 
The practical implication of this result for undirected social graphs is that polarization (in all its variations) can never be induced ``naturally'' by an opinion formation process following rFJ. 
Even more interestingly, polarization cannot be induced by altering the social graph (as long as it stays symmetric). Thus, when an initial state $s$ is given, the final state $z$ with rFJ can only naturally evolve towards non-polarization. 
Vice versa, when the social graph is directed, the above result does not hold since, in a directed graph, the opinion of nodes with stronger social power tends to steer the opinion of the others. While relationship-oriented online social networks, like Facebook, tend to feature undirected graphs,
directed social graphs are common in information-driven online social networks like Twitter.  

\vspace{-10pt}
\section{Experimental evaluation}
\label{sec:evaluation}

In this section we analyze the theoretical results on two real social network graphs: the Karate Club graph~\cite{zachary1977information} and a Facebook graph~\cite{leskovec2012learning}. The Karate Club dataset corresponds to an unweighted graph composed of 34 members. The Facebook dataset is a Facebook snapshot comprising 4039 users. Also in this case the graph is unweighted. After discarding isolated nodes (since they do not contribute at all to the opinion formation process), we end up with a network of 1519 nodes. With these datasets we obtain the values $\hat{w}_{ij}$ that describe the social links between different users. 

For both graphs, we obtain the influence matrix from the social matrix $\hat{W}$ normalizing by rows, i.e. $w_{ij}=\frac{\hat{w}_{ij}}{\sum_k \hat{w}_{ik}}$. To proceed with the analysis we should set the  susceptibility values of nodes, which are not fixed by the social network. To this aim, since both networks have a few very central nodes, as displayed in the SI Appendix, we decided to use a centrality measure to set them. In the following, we will show the results obtained considering the PageRank centrality, which is the centrality measure that better captures the influence among nodes~\cite{proskurnikov2017tutorial}, but similar results hold with other centrality measures (betweenness, degree, eigenvector and k-shell centrality). In our experiments, if the Pagerank centrality of a node $i$ is $\mathrm{C}_i$,  we assign $\lambda_i$ the value of $\mathrm{C}_i$  (and $\mathrm{C}_i^{-1}$) rescaled to $(0,1)$, so that the more central the nodes (and, respectively, the less central), the higher their susceptibility values. Furthermore, we will also show the case in which all nodes have the same susceptibility, set to 0.8. In the SI Appendix, we provide a visualization of the social networks we consider and of the susceptibility values obtained in this way for both datasets.

\newcommand{\gc}{\cellcolor[HTML]{EFEFEF}}

\begin{table}
\centering
\caption{gFJ in the Karate network: values of $\Delta_\Phi$ for all polarization metrics, for the three $\lambda_i$ configurations. The shadowed area highlights were the corresponding opinion vectors $\bm{s}$ are expected to yield polarization.}\label{tab:karate_s}
\setlength{\tabcolsep}{2.5pt}
\begin{tabular}{@{}c cccccc@{}}
\multicolumn{7}{c}{$\lambda_i \propto \mathrm{P}_i$ } \\ 
\toprule
  & $\Delta_{P_1}$ & $\Delta_{P_2}$ & $\Delta_{P_3}$ & $\Delta_{P_4}$ & $\Delta_{NDI}$ & $\Delta_{GDI}$ \\
  \midrule
 $\bm{s}_{unif}$ &  $-8.8\ex{-1}$ & $-1.7\ex{-2}$ & $-5.7\mathrm{e}{-1}$ & $3.1\ex{-1}$ & $-1.03$ & $-29.93$ \\
 $\bm{s}_{B_2(1)}$ & $-6.4\ex{-2}$ & \gc $2.2\ex{-3}$ & \gc $7.6\ex{-2}$ & $4.2\ex{-1}$ & $-3.1\ex{-1}$ & $-2.18$ \\
 $\bm{s}_{B_2(t)}$ & $-1.15$ & \gc $4.0\ex{-2}$ & \gc $1.36$ & $1.77$ & $-5.61$ & $-38.94$ \\
 $\bm{s}_{V_{>1}}$ & $-1.18$ & \gc $4.1\ex{-2}$ & \gc $1.40$ & $1.80$ & $-5.67$ & $-40.28$ \\
 $\bm{s}_{V_{>1}}^{heu}$ & $-1.18$ & \gc $4.1\ex{-2}$ & \gc $1.40$ & $1.80$ & $-5.67$ & $-40.28$ \\
 $\bm{s}_{\max_{P_{2,3}}}$  & $-1.81$ & \gc $6.0\ex{-2}$ & \gc $2.05$ & $2.05$ & $-6.94$ & $-61.48$ \\
 $\bm{s}_{B_1(1)}$ & $-2.0\ex{-1}$ & $-5.4\ex{-3}$ & $-1.9\ex{-1}$ & \gc $1.6\ex{-1}$ & $-1.3\ex{-1}$ & $-6.63$ \\
 $\bm{s}_{\max_{P_4}}$  & $-3.57$ & $3.4\ex{-2}$ & $1.16$ & \gc $2.65$ & $-10.65$ & $-121.421$ \\
\bottomrule
\multicolumn{7}{c}{$\lambda_i \propto \mathrm{P}_i^{-1}$ } \\
\hline
 $\bm{s}_{unif}$ & $-1.62$ & $-4.8\ex{-2}$ & $-1.63$ & $-1.1\ex{-2}$ & $-1.41$ & $-55.22$ \\
 $\bm{s}_{B_2(1)}$ & $6.8\ex{-1}$ & \gc $4.3\ex{-2}$ & \gc $1.45$ & $3.32$ & $-3.1\ex{-1}$ & $23.19$ \\
 $\bm{s}_{B_2(t)}$ & $7.6\ex{-1}$ & \gc $4.3\ex{-2}$ & \gc $1.63$ & $3.52$ & $-3.5\ex{-1}$ & $25.97$ \\
 $\bm{s}_{V_{>1}}$ & $3.0\ex{-1}$ & \gc $1.3\ex{-1}$ & \gc $4.48$ & $7.50$ & $-5.08$ & $10.07$ \\
 $\bm{s}_{V_{>1}}^{heu}$ & $3.1\ex{-1}$ & \gc $1.1\ex{-1}$ & \gc $3.61$ & $6.74$ & $-3.72$ & $10.45$ \\
 $\bm{s}_{\max_{P_{2,3}}}$  & $-2.30$ & \gc $2.01\ex{-1}$ & \gc $6.86$ & $6.86$ & $-3.98$ & $-78.17$ \\
 $\bm{s}_{B_1(1)}$ & $3.1\ex{-1}$  & $2.2\ex{-2}$ & $7.5\ex{-1}$ & \gc $2.98$ & $-1.95$ & $10.67$ \\
 $\bm{s}_{\max_{P_4}}$  & $-4.27$ & $1.3\ex{-1}$ & $4.59$ & \gc $10.04$ & $-9.36$ & $-145.22$ \\
 \arrayrulecolor{black}\bottomrule
\multicolumn{7}{c}{$\lambda_i = 0.8$ } \\
\hline
 $\bm{s}_{unif}$ & $-2.78$ & $-7.7\ex{-2}$ & $-2.62$ & $1.7\ex{-1}$ & $-2.45$ & $-94.67$ \\
 $\bm{s}_{B_2(1)}$ & $-3.3\ex{-1}$ & \gc $1.5\ex{-1}$ & \gc $5.2\ex{-1}$ & $2.44$ & $-1.22$ & $-11.30$ \\
 $\bm{s}_{B_2(t)}$ & $-1.43$ & \gc $6.6\ex{-2}$ & \gc $2.23$ & $5.06$ & $-5.22$ & $-48.49$ \\
 $\bm{s}_{\max_{P_{2,3}}}$  & $-1.73$ & \gc $1.3\ex{-1}$ & \gc $4.57$ & $4.57$ & $-3.33$ & $-58.73$ \\
 $\bm{s}_{B_1(1)}$ & $-8.2\ex{-1}$ & $-1.5\ex{-2}$ & $-5.2\ex{-1}$ & \gc $2.34$ &  $-5.60$ & $-27.90$  \\
 $\bm{s}_{\max_{P_4}}$  & $-6.86$ & $-1.7\ex{-1}$ & $-1.7\ex{-1}$ & \gc $8.10$ & $-13.32$ & $-233.38$ \\
 \hline
\end{tabular}
\vspace{-10pt}
\end{table}

We can now search for the initial opinion vectors that yield polarization in the social network, by applying Theorems~\ref{theo:gFJ_p2p3_polarizingvectors}-\ref{theo:gFJ_p4_polarizingvectors} and Corollary~\ref{coro:gFJ_p2p3_v1_polarizingvectors}. For the sake of brevity, in the following we will consider only the positive polarizing vectors but analogous results can be obtained for negative ones, as stated in the corollaries.
We compute the polarizing vectors $\bm{s}_{B_2(1)}$, $\bm{s}_{B_2(t)}$, $\bm{s}_{\max_{P_{2,3}}}$, $\bm{s}_{V>1}$, $\bm{s}_{V>1}^{heu}$, $\bm{s}_{B_1(1)}$, $\bm{s}_{\max_{P_4}}$ as described in Theorems~\ref{theo:gFJ_p2p3_polarizingvectors}-\ref{theo:gFJ_p4_polarizingvectors} and Corollary~\ref{coro:gFJ_p2p3_v1_polarizingvectors}, and we  compare their polarization with the one of an opinion vector $\bm{s}_{unif}$ with entries randomly drawn from a uniform distribution in $[0,1]$. 
Table~\ref{tab:karate_s} shows the polarization induced by the above vectors on the Karate social network, for the three susceptibility configurations we are considering.  
Recall that the polarization shift $\Delta_\Phi(s)$ for a given polarization metric $\Phi$ (with $\Phi=P_1,\dots,P_4,NDI,GDI$) and initial opinion $\bm{s}$ is derived as $\Phi(Hs)-\Phi(s)$. When $\Delta_\Phi(s)$ is positive, then gFJ polarizes in $s$.
We can see in Table~\ref{tab:karate_s} that the theoretical results are confirmed (this is not surprising, since our theorems are obtained without any approximation). A random prejudice vector $\bm{s}_{unif}$ leads to depolarization for all the polarization metrics. Instead, the prejudices from Theorem~\ref{theo:gFJ_p2p3_polarizingvectors} and Corollary~\ref{coro:gFJ_p2p3_v1_polarizingvectors} yield to $P_2,P_3$-polarization. As expected, according to \eqref{eq:polarizing_inequality}, their corresponding $P_2,P_3$-polarization shifts are progressively increasing moving from $\bm{s}_{B_2(1)}$ to $\bm{s}_{\max_{P_{2,3}}}$ (because the solution is searched for into a larger domain). 
Note that, since the network is small, the numerical solver is able to find the solutions $\bm{s}_{\max_{P_{2,3}}}$ and $\bm{s}_{V_{>1}}$ (the latter is not applicable to the case $\lambda_i =0.8$, because its $H_g$ has only one singular value greater than 1). 
It is interesting to observe that  the solution $\bm{s}_{V_{>1}}^{heu}$ found with the heuristic is, in one case, exactly equal to  the one obtained numerically ($\bm{s}_{V_{>1}}$) and, in the other case, extremely close to it, which confirms the heuristic validity.
The prejudice vectors found according to  Theorem~\ref{theo:gFJ_p4_polarizingvectors}, instead,  yield to $P_4$-polarization, and satisfy the inequality in \eqref{eq:polarizing_inequality_p4}.
With respect to $P_1,GDI$-polarization, while Theorem~\ref{theo:gFJ_p1gdi} cannot tell us whether polarization is achieved in general, we can use it to predict whether $P_1,GDI$-polarization is achieved with the same prejudices that yield $P_2,P_3$ or $P_4$ polarization. We find that the condition (sufficient for polarization) of Theorem~\ref{theo:gFJ_p1gdi} is verified only for the $P_2,P_3$-polarizing prejudices and $\lambda_i \propto \mathrm{C}_i^{-1}$. The columns $\Delta_{P_1}$ and $\Delta_{GDI}$ of Table~\ref{tab:karate_s} confirm polarization in these cases. Finally, as expected from Theorem~\ref{theo:gFJ_ndi}, gFJ is always depolarizing in $NDI$.

Similar results are obtained with the Facebook network (Table ~\ref{tab:facebook_s}). Two points are worth emphasizing. First, the centrality of nodes in the Facebook graph is extremely skewed, with one very central node dominating the graph. Thus, when $\lambda_i \propto \mathrm{C}_i$, there are very few susceptible nodes and polarization is harder to achieve. The opposite effect is observed when $\lambda_i \propto \mathrm{C}_i^{-1}$, and the polarization shifts are higher. Second, note that 
since the Facebook network size is large, the global solutions ($s_{max_{P_{2,3}}}$ and $s_{max_{P_4}}$) could not be found numerically and $\bm{s}_{V_{>1}}$ could only be obtained for $\lambda_i\propto \mathrm{C}_i$. This example showcases the importance of the heuristics derived in the previous section, which can always return a polarizing vector.

\begin{table}
\centering
\caption{gFJ in the Facebook network: values of $\Delta_\Phi$ for all polarization metrics, for the three $\lambda_i$ configurations. The shadowed area highlights were the corresponding opinion vectors are expected to yield polarization.}\label{tab:facebook_s}
\setlength{\tabcolsep}{2.5pt}
\begin{tabular}{@{}c cccccc@{}}
\multicolumn{7}{c}{$\lambda_i \propto \mathrm{C}_i$ } \\
\toprule
  & $\Delta_{P_1}$ & $\Delta_{P_2}$ & $\Delta_{P_3}$ & $\Delta_{P_4}$ & $\Delta_{NDI}$ & $\Delta_{GDI}$ \\
  \midrule
 $\bm{s}_{unif}$ &  $-6.2\ex{-1}$ & $-1.7\ex{-4}$ & $-2.6\ex{-1}$ & $3.6\ex{-1}$ & $-0.31$ & $-940$ \\
 $\bm{s}_{B_2(1)}$ & $-3.6\ex{-4}$ & \gc $5.5\ex{-7}$ & \gc $8.4\ex{-4}$ & $2.8\ex{-2}$ & $-8.8\ex{-4}$ & $-5.5\ex{-1}$ \\
 $\bm{s}_{B_2(t)}$ & $-3.8\ex{-1}$ & \gc $5.8\ex{-4}$ & \gc $8.7\ex{-1}$ & $8.9\ex{-1}$ & $-9.2\ex{-1}$ & $-557$ \\
 $\bm{s}_{V_{>1}}$ & $-4.1\ex{-1}$ & \gc $5.8\ex{-4}$ & \gc $8.8\ex{-1}$ & $9.0\ex{-1}$ & $-9.3\ex{-1}$ & $-630$ \\
 $\bm{s}_{V_{>1}}^{heu}$ & $-4.1\ex{-1}$ & \gc $5.8\ex{-4}$ & \gc $8.8\ex{-1}$ & $9.0\ex{-1}$ & $-9.3\ex{-1}$ & $-630$ \\
 $\bm{s}_{B_1(1)}$ & $-1.6\ex{-1}$ & $-1.0\ex{-4}$ & $-1.6\ex{-1}$ & \gc $6.1\ex{-2}$ & $-2.7\ex{-1}$ & $-240$ \\
\bottomrule
\multicolumn{7}{c}{$\lambda_i \propto \mathrm{C}_i^{-1}$ } \\
\hline
 $\bm{s}_{unif}$ & $-64.51$ & $-1.2\ex{-1}$ & $-180.64$ & $-129.81$ & $-9.45$ & $-97985$ \\
 $\bm{s}_{B_2(1)}$ & $53.83$ & \gc $1.1\ex{-1}$ & \gc $171.10$ & $419.61$ & $-6.3\ex{-2}$ & $81774$ \\
 $\bm{s}_{B_2(t)}$ & $53.94$ & \gc $1.1\ex{-1}$ & \gc $171.45$ & $420.04$ & $-6.3\ex{-1}$ & $81941.1$ \\
 $\bm{s}_{V_{>1}}^{heu}$ & $49.80$ & \gc $1.34\ex{-1}$ & \gc $203.59$ & $497.04$ & $-1.35$ & $75646$ \\
  $\bm{s}_{B_1(1)}$ & $53.73$  & $1.1\ex{-1}$ & $170.75$ & \gc $420.62$ & $-6.3\ex{-1}$ & $81615$ \\
 \bottomrule
\multicolumn{7}{c}{$\lambda_i = 0.8$ } \\
\hline
 $\bm{s}_{unif}$ & $-109.77$ & $-1.6\ex{-1}$ & $-243.81$ & $152.36$ & $-22.71$ & $-166737$ \\
 $\bm{s}_{B_2(1)}$ & $60.34$ & \gc $1.3\ex{-1}$ & \gc $198.13$ & $455.2$ & $-1.02$ & $91656$ \\
 $\bm{s}_{B_2(t)}$ & $60.44$ & \gc $1.3\ex{-1}$ & \gc $198.44$ & $455.56$ & $-1.02$ & $91693$ \\
 $\bm{s}_{V_{>1}}^{heu}$ & $25.47$ & \gc $1.4\ex{-1}$ & \gc $217.38$ & $523.82$ & $-4.03$ & $38687$ \\
 $\bm{s}_{B_1(1)}$ & $60.36$ & $1.3\ex{-1}$ & $197.81$ & \gc $455.93$ &  $-1.03$ & $91693$  \\
 \hline
\end{tabular}
\vspace{-15pt}
\end{table}

We conclude this section by having a closer look at how polarizing prejudices are structured. 
In Figure~\ref{fig:results_Karate}, each arrow corresponds to one node in the Karate graph, and it starts at its prejudice and ends at its final opinion. For $P_2,P_3,P_4$, an increase in polarization is linked, intuitively, to some opinions moving from  more neutral states (close to 0) to more extreme states (close to 1). Indeed, this is what happens in all the cases presented in the figure. In particular, the vectors $\bm{s}_{\textrm{max}_{P_{2,3}}}$ and $\bm{s}_{\textrm{max}_{P_4}}$ that maximize the polarization feature the maximum number of components with initial opinion equal to 1 (with respect to the other opinion vectors): in this way, the nodes with more extreme opinions work synergistically to push the others' opinions closer to theirs. For selecting such an optimal ``cooperative'' group of extreme nodes, one should be able to search for a solution to the optimization problem within the entire domain of opinions. When this is not the case, only suboptimal polarization is achieved.
For example, the vectors $\bm{s}_{B_2(1)},\bm{s}_{B_2(t)}, \bm{s}_{B_1(1)}$, only manage to select one single extreme node responsible for pushing the more neutral opinions of others, while $\bm{s}_{V>1}$ is in an intermediate position, being able to select more extreme nodes than  $\bm{s}_{B_2(t)}$ and fewer than $\bm{s}_{\textrm{max}_{P_{2,3}}}$.
We can also observe that in panels A and B of Figure~\ref{fig:results_Karate}, where the susceptibility varies across nodes, the nodes with initial opinion 1 are always the most stubborn, so that they create a field of attraction for more susceptible nodes. Effectively, the susceptibility assigned to nodes overrides their centrality in the network, hence very central nodes can become attractors or attractees depending on how stubborn they are. Vice versa, when the susceptibility of all nodes is the same (panel C of Figure~\ref{fig:results_Karate}), we observe the unfiltered effect of centrality: the most polarizing prejudices are those in which the most central nodes have initial opinions close to 1, and their final opinion changes much less than the others' opinions. This also confirms that the PageRank centrality is able to capture the ability of nodes to convince the others, and thus it identifies the most influential nodes. 

In Figure~\ref{fig:results_Facebook} we can see the results obtained with the Facebook network. In this case, since the network is large, it is not possible to find the global solutions $\bm{s}_{\max_{P_{2,3}}}$ and $\bm{s}_{\max_{P_{4}}}$. However, the considerations we made for the Karate graph hold also in this case. In particular, the polarizing vectors assign to more stubborn nodes initial opinions closer to 1, so that they can influence susceptible nodes to which they are connected. In the Facebook network, though, due to the scale-free topology with just a few  hubs and many poorly connected nodes, we also observe very susceptible nodes that do not change their opinions (Figure~\ref{fig:results_Facebook}, panel B). These nodes have typically a single edge towards a stubborn node sharing its opinion.



\begin{figure*}[p]
\centering
    \begin{subfigure}[b]{\textwidth}
        \centering
        \includegraphics[width=0.9\textwidth]{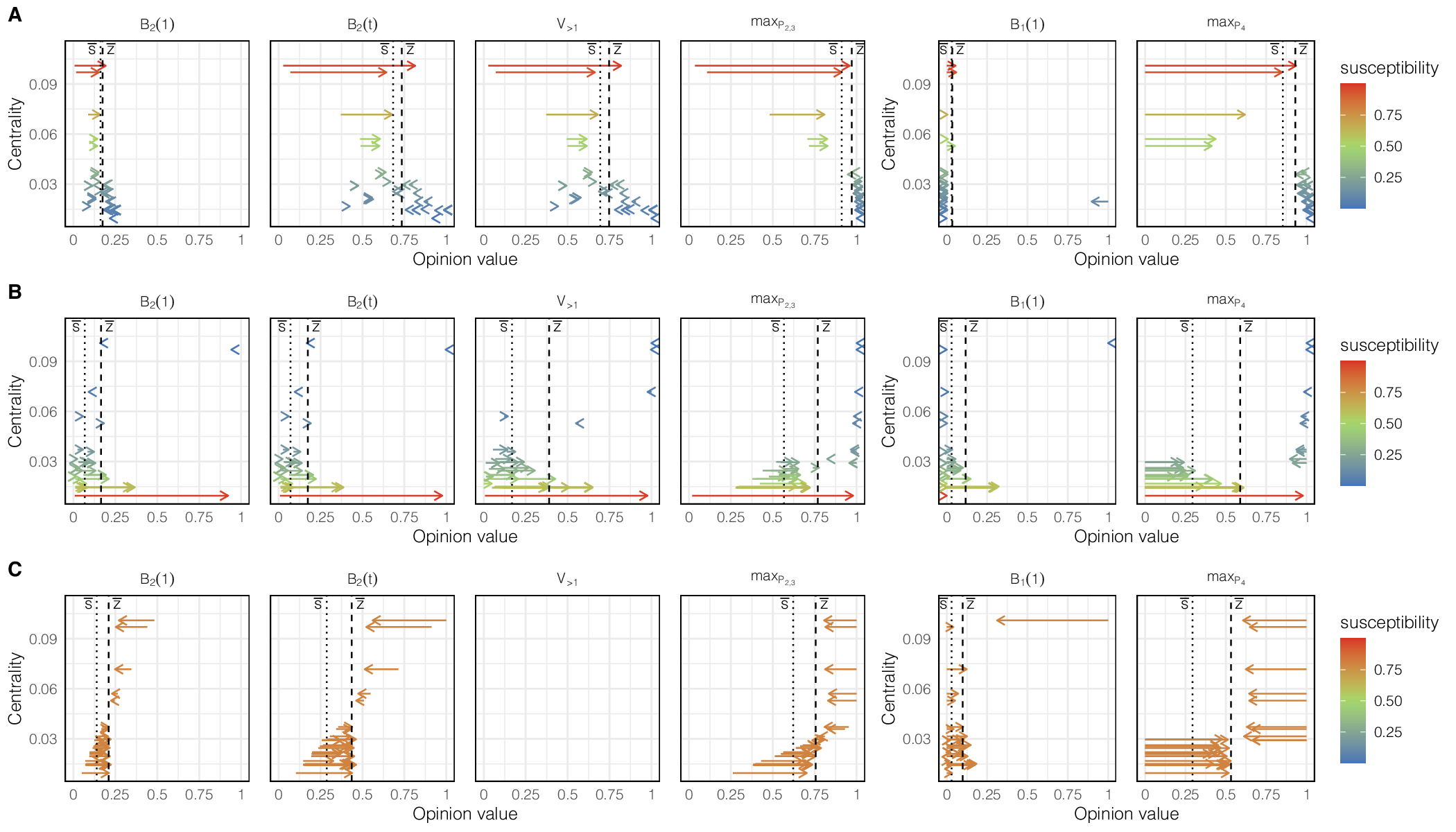}
        \caption{gFJ on the Karate network.}
        \label{fig:results_Karate}
    \end{subfigure}
    \begin{subfigure}[b]{\textwidth}
        \centering
        \includegraphics[width=0.9\textwidth]{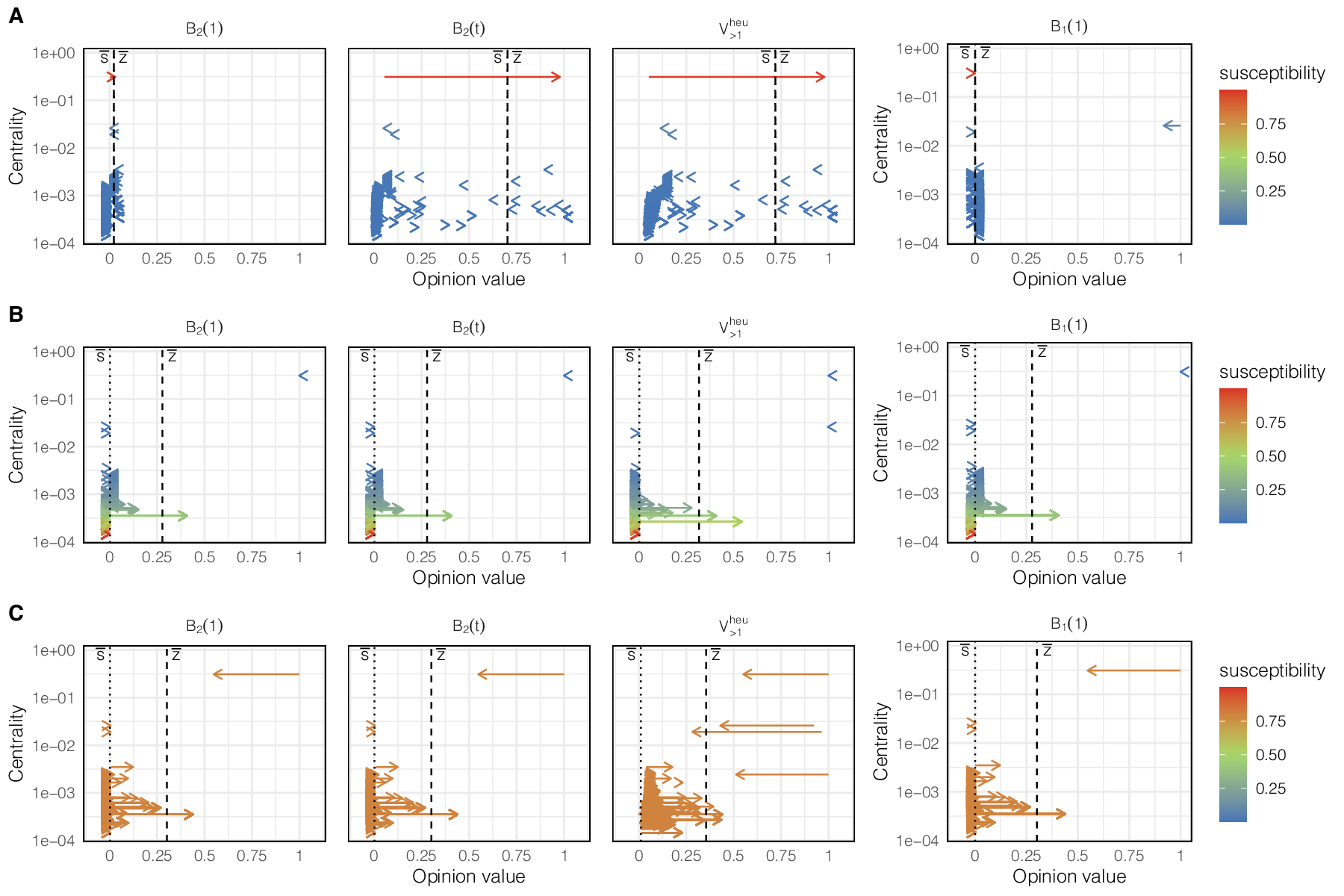}
        \caption{gFJ on the Facebook network.}
        \label{fig:results_Facebook}
    \end{subfigure}
\vspace{-10pt}
 \caption{Polarizing opinions vs nodes centrality. Each arrow corresponds to one node in the graph. An arrow starts at the initial prejudice and ends at the final opinion. The color of the arrow corresponds to the susceptibility assigned to the node. (A) $\lambda_i \propto \mathrm{C}_i$ (B) $\lambda_i \propto \mathrm{C}_i^{-1}$ (C) $\lambda_i = 0.8$. In each panel, on the left the opinions yielding $P_2,P_3$-polarization, on the right the opinions yielding $P_4$-polarization. With the dotted and dashed line we denote the average initial and final opinion, respectively.}
 \label{fig:results}
 \vspace{-10pt}
\end{figure*}

\vspace{-20pt}
\section{Conclusions}
In this work we have investigated under which conditions the popular Friedkin-Johnsen model yields polarised opinions. The first contribution of the work has been to systematize the variety of FJ models used in the literature, and the many definitions of polarization. 
Then, as the main contribution of the work, we have derived the conditions under which the FJ models yield to polarization, for each of the polarization classes identified from the related literature. Moreover, we have identified a methodology for obtaining polarizing prejudices in most cases. When exact solutions could not be found (because the corresponding problem was NP-hard), we have defined heuristics to find a sub-optimal solution. Our theoretical results have then been tested on two real-life social networks. We have seen that both the centrality of nodes in the social network as well as their individual susceptibility to the opinions of other nodes play a key role in defining their influence power, hence their ability to polarize. 

The results presented in this work can be used to understand under which conditions polarization of opinions will emerge for a given social network. While the application to online social networks immediately comes to mind (as showcased in Section~\ref{sec:evaluation}, the social graph can be collected from online social network platforms such as Twitter, Facebook, Reddit, etc), other applications can be foreseen, such as failure mode and effect analysis in reliability engineering~\cite{ZHANG2021}. In addition, the results presented in this paper can be exploited to design interventions to bring polarization under control. More in general, since opinions in the FJ model are actually abstracted as values in the $[0,1]$ or $[-1,1]$ domain, the FJ model could be used to study information propagation, the evolution of decision processes, and consensus/polarization on networks, as long as the mapping in the same unidimensional domain remains appropriate.

\vspace{-15pt}

\bibliography{references.bib}{}
\bibliographystyle{IEEEtran}

\vskip -2\baselineskip plus -1fil
\begin{IEEEbiography}[{\includegraphics[width=1in,height=1.25in,clip,keepaspectratio]{./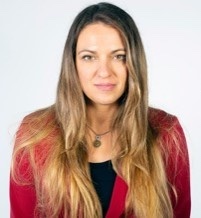}}]{Elisabetta Biondi}
Elisabetta Biondi is a researcher at the Institute for Informatics and Telematics (IIT) of the National Research Council of Italy (CNR). The general focus of her research activity is the mathematical modelling of human behaviours in social networks, with a special focus on mobility and information diffusion. Currently, she is working in the field of opinion diffusion. She is a member of the CNR research unit for the H2020  SoBigData++ and HumaneE-AI-Net projects and was involved in other H2020 and FP7 projects too.
\end{IEEEbiography}

\vskip -2.5\baselineskip plus -1fil

\begin{IEEEbiography}[{\includegraphics[width=1in,height=1.25in,clip,keepaspectratio]{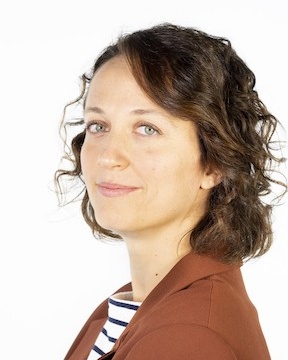}}]{Chiara Boldrini}
is a Senior Researcher at IIT-CNR. Her research interests are in decentralized AI, computational social sciences, mobile and ubiquitous systems. She has published 50+ papers on these topics. She is the IIT-CNR co-PI for H2020 SoBigData++ and HumaneE-AI-Net projects, and was involved in several EC projects since FP7. She is in the Editorial Board of Elsevier Pervasive and Mobile Computing and of Elsevier Computer Communications. She was the lead guest editor for the PMC Special Issue on IoT for Fighting COVID-19. She has served as TPC vice-chair of IEEE PerCom'21 and, over the years, has been in the organizing committee of  several IEEE and ACM conferences/workshops, including IEEE PerCom and ACM MobiHoc.
\end{IEEEbiography}

\vskip -2.5\baselineskip plus -1fil

\begin{IEEEbiography}[{\includegraphics[width=1in,height=1.25in,clip,keepaspectratio]{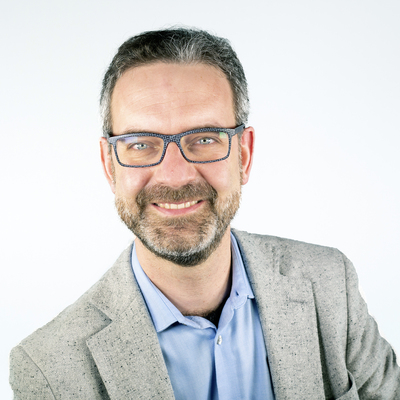}}]{Andrea Passarella} 
(PhD 2005) is a Research Director at the Institute for Informatics and Telematics (IIT) of CNR. Prior to joining IIT, he was with the Computer Laboratory of the University of Cambridge, UK. He has published 200+ papers on human-centric mobile networks, Online and Mobile social networks, opportunistic, ad hoc and sensor networks. He received four best paper awards, including at IFIP Networking 2011 and IEEE WoWMoM 2013. He was General Co-Chair of IEEE PerCom 2022 and WoWMoM 2019, and workshops co-chair for IEEE INFOCOM 2019. He was the PC co-chair of IEEE WoWMoM 2011, Workshops co-chair of several IEEE and ACM conferences. He is the founding Associate EiC of the Elsevier Journal Online Social Networks and Media (OSNEM). He is co-author of the book "Online Social Networks: Human Cognitive Constraints in Facebook and Twitter Personal Graphs" (Elsevier, 2015). He is the chair of the IFIP WG 6.3 "Performance of Communication Systems".
\end{IEEEbiography}

\vskip -2.5\baselineskip plus -1fil

\begin{IEEEbiography}[{\includegraphics[width=1in,height=1.25in,clip,keepaspectratio]{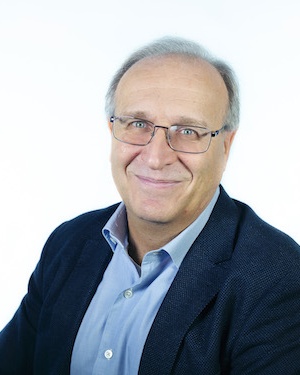}}]{Marco Conti}
is a CNR research director and, currently, he is the director of IIT-CNR institute. He has published more than 400 scientific articles related to design, modeling, and experimentation of computer and communication networks, pervasive systems, and online social networks. He is the founding EiC of Online Social Networks and Media, EiC for Special Issues of Pervasive and Mobile Computing, and, from 2009 to 2018, EiC of Computer Communications. He has received several awards, including the Best Paper Award at IFIP Networking 2011, IEEE ISCC 2012, and IEEE WoWMoM 2013. 
He was included in the "2017 Highly Cited Researchers" list compiled by Web of Science for the most cited articles in Computer Science. 
He served as the General/Program chair of several major conferences, including IFIP Networking 2002, IEEE WoWMoM 2005 and 2006, IEEE PerCom 2006 and 2010, ACM MobiHoc 2006, IEEE MASS 2007 and IEEE SmartComp 2021.
\end{IEEEbiography}

\end{document}